\tikzset{%
	do path picture/.style={%
		path picture={%
			\pgfpointdiff{\pgfpointanchor{path picture bounding box}{south west}}%
			{\pgfpointanchor{path picture bounding box}{north east}}%
			\pgfgetlastxy\x\y%
			\tikzset{x=\x/2,y=\y/2}%
			#1
		}
	},
	plus/.style={do path picture={    
			\draw [line cap=round] (-3/4,0) -- (3/4,0) (0,-3/4) -- (0,3/4);
	}}
}
\theoremstyle{plain}
\newtheorem{fact}{Fact}
\newtheorem{remark}{Remark}
\newtheorem{theorem}{Theorem}
\newtheorem{lemma}{Lemma}
\newtheorem{example}{Example}
\newtheorem{proposition}{Proposition}
\newtheorem{definition}{Definition}
\renewcommand{\phi}{\varphi}
\DeclareMathOperator{\binder}{\downarrow}
\newcommand{\poison}{\bullet}
\newcommand{\poisonRelation}{\stackrel{\poison}{\rightarrow}}
\newcommand{\equivModel}{\stackrel{\poisonAtom}{\leftrightsquigarrow}}
\newcommand{\bisimulation}{\stackrel{\poisonAtom}{\rightleftharpoons}}
\newcommand{\playerName}{\mathbb{P}}
\newcommand{\deamonName}{\mathbb{O}}
\newcommand{\poisonAtom}{\mathfrak{p}}
\newcommand{\poisonAtomPred}{\mathfrak{P}}
\newcommand{\propSet}{{\bf P}}
\newcommand{\nomSet}{{\bf N}}
\newcommand{\Model}{\mathcal{M}}
\newcommand{\modelPML}{\mathcal{M}}
\newcommand{\languagePML}{\mathcal{L}^\poisonAtom}
\newcommand{\poisonTheory}{\mathbb{T}^\poisonAtom}
\newcommand{\FOL}{\ensuremath{\mathsf{FOL}}}
\newcommand{\SML}{\ensuremath{\mathsf{SML}}}
\newcommand{\PML}{\ensuremath{\mathsf{PML}}}
\newcommand{\K}{\ensuremath{\mathsf{K}}}
\newcommand{\lanFOL}{\mathcal{L}}
\newcommand{\languageHL}{\mathcal{L}^{\mathcal{H}(\binder)}}
\newcommand{\HLSatisfaction}{\models_{{\bf H}}}
\newcommand{\ar}{\ensuremath{\shortrightarrow}}
\newcommand{\tuple}[1]{( #1 )}
\newcommand{\IFF}{\Longleftrightarrow}
\newcommand{\bu}{\lbox{\univ}} %
\newcommand{\dl}{\lozenge} %
\newcommand{\bl}{\square} %
\newcommand{\limp}{\rightarrow}
\newcommand{\lbox}[1]{[#1]}
\newcommand{\univ}{\ensuremath{U}}
\newcommand{\PRO}{\playerName}
\newcommand{\OPP}{\deamonName}
\title{Credulous Acceptability, Poison Games and Modal Logic}
\author{
	Davide Grossi\inst{1}
	\and
	Simon Rey\inst{2}$^*$
}
\institute{
	University of Groningen,
	Groningen, Netherland\\
	\email{d.grossi@rug.nl}
	\and
	ENS Paris-Saclay,
	Cachan, France\\
	\email{srey@ens-paris-saclay.fr}
}
\authorrunning{Grossi, Rey}
\titlerunning{Credulous Acceptability, Poison Games and Modal Logic}
\begin{document}	
	\maketitle

	\begin{abstract}
		The Poison Game is a two-player game played on a graph in which one player can influence which edges the other player is able to traverse. It operationalizes the notion of existence of credulously admissible sets in an argumentation framework or, in graph-theoretic terminology, the existence of non-trivial semi-kernels. We develop a modal logic (poison modal logic, \PML) tailored to represent winning positions in such a game, thereby identifying the precise modal reasoning that underlies the notion of credulous admissibility in argumentation. We study model-theoretic and decidability properties of \PML, and position it with respect to recently studied logics at the cross-road of modal logic, argumentation, and graph games. 
	\end{abstract}

	\section{Introduction}
	
	In abstract argumentation theory \cite{baroni09semantics,baroni11introduction}, an argumentation framework (or {\em attack graph}) \cite{dung95acceptability} is a directed graph $\tuple{A, \ar}$  where $A$ is a set of nodes (or {\em arguments}) and $\ar \subseteq A^2$ is a set of directed edges (or {\em attacks}). For $x, y \in A$ such that $x \ar y$ we say that $x$ attacks $y$. An {\em admissible} set \cite{dung95acceptability}, of a given attack graph, is a set $X \subseteq A$ such that: (a) no two nodes in $X$ attack one another; and (b) for each node $y \in A \backslash X$ attacking a node in $X$, there exists a node $z \in X$ attacking $y$. That is, $X$ is internally coherent, and can counterattack any attack moved to any of its arguments. Such sets are also called {\em semi-kernels} in the theory of directed graphs \cite{galeana84kernels}. More precisely, if $X$ is an admissible set of $\tuple{A, \ar}$, then it is a semi-kernel of the directed graph obtained by inverting the attack relation $\ar$ (i.e., the `being attacked' graph), and vice versa. These sets form the basis of all main argumentation semantics first developed in  \cite{dung95acceptability} and they are central to the influential graph-theoretic systematization of logic programming and default reasoning pursued in \cite{dimopoulos94graph, dimopoulos96graph}, where they have been proven to correspond to the so-called partial stable models of logic programming \cite{przymusinzki90extended}. 
	
	\vspace{-0.1cm}
	\paragraph{Contribution}
	Given the importance of admissible sets in argumentation theory, one of the key reasoning tasks in abstract argumentation consists in deciding whether any given argumentation framework contains non-empty admissible sets. In the terminology of argumentation, this amounts to deciding whether the framework contains any {\em credulously admissible} arguments. The property corresponds in turn to the existence of non-trivial semi-kernels in the inverted attack graph. Credulous acceptability is a benchmark semantics for the evaluation of arguments in abstract argumentation \cite{bench-capon07argumentation}. Interestingly, the notion has an elegant operationalization in the form of two-player games, called {\em Poison Game} in the graph theory literature \cite{Duchet_Meyniel_1993}, and {\em game for credulous acceptance} in the argumentation theory literature \cite{vreeswijk00credulous,modgil09proof}. The poison game is the starting point of the paper. Inspired by it we define a new modal logic, called {\em poison modal logic} (\PML), whose operators capture the strategic abilities of players in the Poison Game, and are therefore fit to capture the modal reasoning involved in the notion of credulous admissibility. This answers, at least in part, a research question left open in \cite{grossi10logic}. The paper studies \PML{} by: defining a suitable notion of bisimulation for it, which in turn answers another open question \cite{gabbay14when} concerning the logic of credulous admissibility, namely a notion of structural equivalence tailored for it; establishing a first-order characterization result for \PML{}  in the tradition of \cite{VanBenthem_1983}; proving the undecidability of satisfiability in a multi-modal variant of \PML; and exploring its links with hybrid \cite{Blackburn_2006_handbook} and memory logics \cite{areces08expressive,areces11expressive}. More broadly we see the paper as a contribution to bridging, in a systematic way, concepts from abstract argumentation theory \cite{dung95acceptability}, games on graphs \cite{berge96combinatorial} and modal logic \cite{Blackburn_2001}.
	
	\vspace{-0.1cm}
	\paragraph{Related work}
	The paper is a natural continuation of the line of work interfacing abstract argumentation and modal logic \cite{grossi10logic,grossi11argumentation,gabbay14when,grossi14justified,shi17argument,shi18beliefs}, which focuses on the modal logic characterization of key argumentation-theoretic notions, and their analysis through model and proof-theoretic tools. A first bimodal dynamic logic tailored to the poison game was introduced in \cite{Mierzewski_Blando_2016}, where two modalities are used to keep track of which parts of the underlying graph are accessible to each player. Our approach is somewhat simpler and based on the combination of one classical and one dynamic modality. \PML{} is also directly related to so-called memory logics, extensively studied in the last decade \cite{areces08expressive,areces11expressive}. In fact \PML{} can be thought of as a modal logic with two operators: a standard one, and one which 'memorizes' the states which are reached by traversing an edge of the underlying frame. 
	The paper relates also to the research program investigating the modal logic theory of graph games, sparked by the recent work on sabotage modal logic (\SML{}, \cite{VanBenthem_2005, loeding03model,rohde06mu,areces13tableaux,areces15relation, Aucher_al_2017}). \SML{} was tailored to capture the logic behind winning strategies in a specific two-player, perfect-information, zero-sum game played on graphs, known as the sabotage game \cite{VanBenthem_2005}.  Like \SML{}, \PML{} sits at the intersection of two well-established lines of research in modal logic: dynamic epistemic logic \cite{VanDitmarsch_2007} and the logical dynamics tradition it generated, which is broadly concerned with the study of operators interpreted on transformations of semantics structures \cite{Aucher_al_2009_global,Kooi_Renne_2011,benthem11logical,areces15relation}; and game logics concerning the logical analysis of games \cite{VanBenthem_2014,benthem10game}. 
	
	\vspace{-0.1cm}
	\paragraph{Outline}
	The article is organized as follow. First we set the ground in Section \ref{sec:Preliminaries} by showing how the standard modal language can already capture the key logic behind statements of this type: ``set $X$ is a semi-kernel'' (of a given directed graph), and by introducing the Poison Game. Section \ref{sec:PML} introduces \PML{} and establishes some basic facts.
	Section \ref{sec:FOL} gives a translation into First Order Logic (FOL) which is invariant for the poison bisimulation as defined in Section \ref{sec:PoisonBisi}. Decidability is addressed in Section \ref{sec:Decidability}. Section \ref{sec:conclusions} explores links between \PML{} and other logical frameworks and concludes.

	\section{Preliminaries} \label{sec:Preliminaries}
	
	We start by providing some preliminaries on existing bridges between modal logic and abstract argumentation, and a concise presentation of the Poison Game.
	
	\subsection{Modal Logic and Credulous Admissibility}
	
	As hinted in the introduction, and following \cite{grossi10logic} we study attack graphs $\tuple{A, \ar}$ through their inversions (the `being attacked' graphs) $\tuple{A, \ar^{-1}}$ which we view as Kripke frames \cite{Blackburn_2001} $\tuple{W, R}$ where, $W = A$, and $R = \ar^{-1}$. So, writing $w R w'$ stands for argument $w$ is attacked by argument $w'$.\footnote{In what follows we will refer to $\tuple{W, R}$ also as attack graphs even though, technically speaking, they are inversions of attack graphs.} In this view, a Kripke model $M = \tuple{W, R, V}$, where $V$ is a valuation function $V: \propSet \to 2^W$, can be thought of an argumentation framework where propositional labels in $\propSet$ are assigned to sets of arguments. The standard modal language,
	\[
	\mathcal{L} :\phi ::= p \mid \neg \phi \mid \phi \wedge \phi \mid \lozenge \phi,
	\]
	becomes therefore a language in which it is possible to express properties of argumentation frameworks. Through the standard modal semantic clause
	\begin{align}
	(\tuple{W,R,V}, w) \models \dl \phi \Leftrightarrow \exists w' \in W, w R w', (\tuple{W,R,V}, w') \models \phi,
	\end{align}
	formulas $\lozenge \phi$ are statements of the type ``the current argument is attacked by an argument in the set of arguments denoted by $\phi$''. Specifically, as shown in \cite{grossi10logic} a number of key argumentation-theoretic properties are expressible in the standard modal language extended with the universal modality $\bu$ (known as logic $\K^U$ \cite{Blackburn_2001}). In particular, formula 
	\begin{align}
	\bu (p \limp \neg \dl p) \land \bu (p \limp \bl\dl p)
	\end{align}
	expresses the property ``the set denoted by $p$ under function $V$ is admissible'' (in the underlying argumentation framework) \cite{grossi10logic}. So a well-studied logic such as $\K^U$ suffices to express that a given set of arguments is admissible.
	However, the existence of credulously admissible arguments has an obvious second-order flavor and the question of whether it could be modally expressed without resorting to second-order modal logic remained an open question in \cite{grossi10logic}.

	\subsection{The Poison Game}
	
	The Poison Game was introduced in \cite{Duchet_Meyniel_1993} to characterize the existence of non-empty semi-kernels in directed graphs. A very similar game was later independently introduced in \cite{vreeswijk00credulous} to characterize credulous admissibility of arguments in argumentation frameworks.\footnote{A detailed comparison of the two games is not in the scope of this paper. Albeit very similar, the two games are from a technical point of view slightly different, and are adequate with respect to slightly different notions: the poison game is adequate w.r.t. the existence of non-empty admissible sets; the game from \cite{vreeswijk00credulous} is adequate w.r.t. the membership of one given argument to at least one admissible set.} Our presentation of the game follows that of \cite{Duchet_Meyniel_1993}. 
	
	The Poison Game is a two-player ($\PRO$, the proponent, and $\OPP$, the opponent), win-lose, perfect-information game \cite{osborne94course} played on a directed graph $\tuple{W,R}$. The game starts by $\PRO$ selecting a node $w \in W$. After this initial choice, $\OPP$ selects a successor of the node picked by $\PRO$, $\PRO$ then selects a successor of the node picked by $\OPP$ and so on. However, while $\OPP$ can choose any successor of the current node, $\PRO$ can only select successors which have not yet been visited---{\em poisoned}---by $\OPP$. $\OPP$ wins if and only if $\PRO$ ends up in a position with no available successors. In all other cases the game is won by $\PRO$. 

	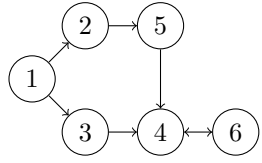
\begin{figure}[t]
		\centering
		\begin{tikzpicture}
		\node[draw,circle]	(a)  {$1$};
		\node[draw,circle]	(b)  [above right of=a]  {$2$};
		\node[draw,circle]	(c)  [below right of=a] {$3$};
		\node[draw,circle]	(e)  [right of=c] {$4$};
		\node[draw,circle]	(d)  [right of=b] {$5$};
		\node[draw,circle]	(f)  [right of=e] {$6$};
		
		\path[->] (a) edge	node {} (b);
		\path[->] (a) edge	node {} (c);
		\path[->] (b) edge	node {} (d);
		\path[->] (c) edge	node {} (e);
		\path[<->] (e) edge	node {} (f);
		\path[->] (d) edge	node {} (e);
		
		\end{tikzpicture}
		\caption{Graph of Examples \ref{example:game1} and \ref{example:game2}. The graph is (the inversion of) a framework discussed in \cite{vreeswijk00credulous}.}
		\label{figure:backtrack}
	\end{figure}
	
	\begin{example} 
		\label{example:game1}
		A possible run of the Poison Game on the graph depicted in Figure \ref{figure:backtrack} is: $\PRO$ starts by selecting node 1, then $\OPP$ moves to and poisons node 3, $\PRO$ answers by moving to 4, $\OPP$ in returns moves to 6 and from there the game will repeat the two last moves indefinitely. Player $\PRO$ therefore wins the game.
	\end{example}
	
	What makes this game interesting is that the existence of a winning strategy for $\PRO$, if $\tuple{W,R}$ is finite\footnote{The result holds also with a weaker condition requiring every weakly connected component of the directed graph to be finite.}, is equivalent to the existence of a (non-empty) semi-kernel in the graph or, in the argumentation terminology, the existence of credulously admissible arguments in the inverted graph $\tuple{W,R^{-1}}$.
	
	\begin{theorem}[\cite{Duchet_Meyniel_1993} \cite{Duchet_Meyniel_1993}]
		\label{thm:DuchetMeyniel}
		Let $\tuple{W,R}$ be a finite directed graph. There exists a non-empty semi-kernel in $\tuple{W,R}$ if and only if $\PRO$ has a winning strategy in the Poison Game for $\tuple{W,R}$.
	\end{theorem}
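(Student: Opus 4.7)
I plan to prove the two implications separately, concentrating the real work on $(\Rightarrow)$.

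For $(\Leftarrow)$, suppose $K\neq\emptyset$ is a semi-kernel of $\tuple{W,R}$. I would describe a ``stay inside $K$'' strategy for $\PRO$: open with any $w_0\in K$, and whenever $\OPP$ moves from a $\PRO$-position $w\in K$ to some $w'$, $R$-independence of $K$ gives $w'\notin K$, and the second semi-kernel clause supplies some $z\in K$ with $w'Rz$, which $\PRO$ plays. A short induction then shows that all $\PRO$-positions lie in $K$ and all $\OPP$-positions lie in $W\setminus K$; hence the poisoned set stays disjoint from $K$ and $\PRO$ can always move. This direction is routine.

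For $(\Rightarrow)$, the plan is to extract a semi-kernel from a $\PRO$-winning strategy. Since $W$ is finite, the Poison Game is a finite safety game for $\PRO$ over the extended state space $W\times 2^W$ (current node together with the set of already-poisoned nodes); standard results on finite safety games then yield a positional (memoryless) winning strategy $\sigma$ for $\PRO$ on this extended space. I would then fix a specific strategy for $\OPP$: from any $\PRO$-position, pick an unpoisoned $R$-successor whenever one exists, and otherwise cycle through the already-poisoned $R$-successors in a fixed round-robin order. Finiteness of $W$ guarantees that the resulting play $\pi$ against $\sigma$ eventually stabilises at some poisoned set $P_\infty$, and from that moment on every $\OPP$-move lies inside $P_\infty$. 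Define $K$ to be the set of $\PRO$-positions that are visited infinitely often in $\pi$; $K$ is finite and non-empty.

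The heart of the proof is checking that this $K$ is a semi-kernel. For $R$-independence, if $u,v\in K$ with $uRv$, then post-stabilisation $\OPP$'s greedy rule forbids any unpoisoned $R$-successor of $u$, so $v\in P_\infty$; but $v\in K$ means $\PRO$ re-enters $v$ infinitely often, forcing $v\notin P_\infty$, a contradiction. For the second semi-kernel clause, take $y$ with $uRy$ for some $u\in K$ and $y\notin K$. The same argument gives $y\in P_\infty$, and the round-robin part of $\OPP$'s strategy ensures $\OPP$ visits $y$ from $u$ infinitely often post-stabilisation; then $\PRO$'s deterministic response $\sigma(y,P_\infty)$ is itself reached by $\PRO$ infinitely often and hence lies in $K$, witnessing the required $R$-successor of $y$ inside $K$.

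The main obstacle is the design of $\OPP$'s strategy used in this last step: it must simultaneously be ``greedy enough'' to push every $R$-successor of $K$ into $P_\infty$ (which is what drives independence) and ``fair enough'' across the already-poisoned successors to visit every relevant $y$ infinitely often (which is what lets $\sigma$ certify the second semi-kernel clause). Beyond this, the reliance on memoryless determinacy of finite safety games and on finiteness of $W$ to obtain stabilisation is completely standard.
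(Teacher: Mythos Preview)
Your argument is correct and follows the same overall line as the paper's sketch: for the easy direction $\PRO$ simply stays inside the semi-kernel, and for the hard direction one looks at the set of $\PRO$-positions that recur infinitely often along a play consistent with a winning strategy and argues that this set is a semi-kernel. (Note, incidentally, that you have the labels $(\Rightarrow)$ and $(\Leftarrow)$ swapped relative to the statement as written.)

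Where you genuinely go beyond the paper is in the hard direction. The paper's sketch is very loose: it fixes neither the $\OPP$-play nor the form of $\PRO$'s strategy, and its justification of the absorption clause (``there exists a successor $z$ of $y$ that can be selected by $\PRO$ infinitely often'') tacitly assumes that $\OPP$ actually visits every successor $y$ of every recurrent $\PRO$-node infinitely often, which is not automatic. You make this explicit by (i) passing to a positional winning strategy on $W\times 2^W$ via safety-game determinacy, and (ii) designing an $\OPP$ strategy that is greedy on unpoisoned successors and round-robin on poisoned ones. This buys you a clean stabilisation argument and a rigorous verification of both semi-kernel clauses; the price is the appeal to an external determinacy result. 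It is worth noting that positionality of $\sigma$ is a convenience rather than a necessity: once $\OPP$ visits $y$ infinitely often post-stabilisation, finiteness of $R(y)$ already forces some $\PRO$-response to recur infinitely often, so the argument goes through with an arbitrary winning strategy for $\PRO$.
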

	\begin{proof}[Sketch of proof]
		\fbox{Left-to-right} If a non-empty semi-kernel $X \subset W$ exists, then $\PRO$ can win the game simply by picking the initial node in $X$ and then responding to each move of $\OPP$ with a successor in $X$, which is guaranteed to exist since $X$ is a semi-kernel.
		\fbox{Right-to-left} If $\PRO$ has a winning strategy, she can play indefinitely no matter what $\OPP$ does. As $W$ is finite, this means that $\PRO$ visits a finite set of states infinitely often. Call such set $X$. It suffices to show that $X$ is indeed a semi-kernel. Clearly no state in $X$ has a successor in $X$ (i.e., $X$ is independent), as otherwise such successor would have been poisoned by $\OPP$. Moreover for each state $x \in X$, for any successor $y \in W \backslash X$ of $x$ that can be selected by $\OPP$, there exists a successor $z$ of $y$ that can be selected by $\PRO$ infinitely often, therefore belonging to $X$. 
	\end{proof}
	
	\begin{example}
		\label{example:game2}
		In the graph of Figure \ref{figure:backtrack}: $\{4\}$ and $\{6\}$ are two semi-kernels. $\PRO$ has several winning strategies in the game played on this graph. She can choose $1$ and force the infinite run described in Example \ref{example:game1}. Alternatively she could simply choose $4$ or $6$ and again force an infinite run.
	\end{example}

	\section{Poison Modal Logic (\PML)} \label{sec:PML}
	
	This section introduces the syntax and semantics of \PML, discusses some of its validities and some properties it is able to express. The language we propose is directly motivated by the Poison Game: the standard modality $\lozenge$ tracks moves of $\PRO$ selecting successors of a current state, the novel poison modality $\blacklozenge$ tracks moves of $\OPP$ selecting successors of a current state {\em and} poisoning them.

	\subsection{Syntax \& Semantics}
	
	The poison modal language $\languagePML$ is defined by the following BNF:
	\[
	\languagePML: \phi ::= p \mid \neg \phi \mid (\phi \wedge \phi) \mid \lozenge \phi \mid \blacklozenge \phi,
	\]
	where $p \in \propSet \cup \{\poisonAtom{}\}$ with $\propSet$ a countable set of propositional atoms and $\poisonAtom$ a distinguished atom called {\em poison atom}. We will also discuss multi-modal variants of the above language, denoted $\languagePML_n$, where $n \geq 1$ denotes the number of distinct pairs $(\lozenge_i,\blacklozenge_i)$ of modalities, with $1 \leq i \leq n$ and where each $\blacklozenge_i$ comes equipped with a distinct poison atom $\poisonAtom_i$.
	
	This language is interpreted on Kripke models $\modelPML = (W, R, V)$, as defined above. When confusion may arise, we will write $W^\modelPML$, $R^\modelPML$ and $V^\modelPML$ to refer to the elements of the model $\modelPML$. We will note $w \in \modelPML$ to say $w \in W^\modelPML$. A pointed model is a pair $(\modelPML, w)$ with $w\in \modelPML$. We call $\mathfrak{M}$ the set of all pointed models and $\mathfrak{M}^\emptyset$ the set of pointed models $(\modelPML, w)$ such that $V^\modelPML(\poisonAtom) = \emptyset$, that is the class of pointed models where no state satisfies $\poisonAtom$.

	We define now an operation $\poison$ on models which, given an input model and a state, modifies its function $V$ by adding that state to $V(\poisonAtom)$. Formally, for $\modelPML = (W, R, V)$ and $w \in W$:
	\[
	\modelPML_w^\poison = (W, R, V)_w^\poison = (W, R, V'),
	\]
	where $\forall p \in \propSet, V'(p) = V(p)$ and $V'(\poisonAtom) = V(\poisonAtom) \cup \{w\}$. We are now equipped to formally define the semantics of $\languagePML$.
	
	\begin{definition}[Satisfaction relation]
		\label{def:semanticsPML}
		Let $(\modelPML , w) \in \mathfrak{M}$.
		The satisfaction relation of \PML{} is defined recursively as follows:
		\begin{align*}
		(\modelPML, w) & \models p \Longleftrightarrow w \in V(p), \forall p \in \propSet \cup \{\poisonAtom\} \\
		(\modelPML, w) & \models \neg \phi \Longleftrightarrow (\modelPML, w) \not \models \phi \\
		(\modelPML, w) & \models \phi \wedge \psi \Longleftrightarrow (\modelPML, w) \models \phi \text{ and } (\modelPML, w) \models \psi \\
		(\modelPML, w) & \models \lozenge \phi  \Longleftrightarrow \exists v \in W, wRv, (\modelPML, v) \models \phi \\
		(\modelPML, w) & \models \blacklozenge \phi \Longleftrightarrow \exists v \in W, wRv, (\modelPML_v^\poison, v) \models \phi.
		\end{align*}
	\end{definition}
	The poison formula $\blacklozenge \phi$ is then true at $w$ in $\modelPML$ if and only if $\phi$ is true at a successor $w'$ of $w$ in the model obtained from $\modelPML$ by adding $w'$ to the valuation of the poison atom $\poisonAtom$. Validity in a model and in a frame are defined in the usual way, but the relevant class of models to specify \PML{} is $\mathfrak{M}^\emptyset$, that is, those models where $\poisonAtom$ starts with an empty valuation. \PML{} is therefore the set of formulas which are valid in $\mathfrak{M}^\emptyset$. Similarly, $\PML_n$ is the set of formulas of $\languagePML_n$ which are valid in $\mathfrak{M}^\emptyset$.
	We introduce some auxiliary definitions.
	\begin{definition}[Poison modal theory]
		\label{def:poisonModalTheory}
		The poison modal theory of a pointed model $(\modelPML, w) \in \mathfrak{M}$ is the set $\poisonTheory(\modelPML, w) \subseteq \languagePML$ of formulas defined as follows:
		$$
		\poisonTheory(\modelPML, w) = \{\phi \in \languagePML \mid (\modelPML, w) \models \phi\}.
		$$
	\end{definition}
	
	\begin{definition}[Poison relation] \label{def:poisonRelation}
		The poisoning relation $\poisonRelation$ between two pointed models is defined as:
		\begin{align*}
		(\modelPML, w) \poisonRelation (\modelPML', w') & \IFF wR^\modelPML w'\text{ and }\modelPML' = \modelPML^\poison_{w'}.
		\end{align*}
		Furthermore, we denote $(\modelPML, w)^\poison$ the set of all pointed models accessible from $\modelPML$ via a poisoning relation, formally: 
		\begin{align*}
		(\modelPML, w)^\poison & = \{(\modelPML', w') \mid(\modelPML, w) \poisonRelation (\modelPML', w')\}.
		\end{align*}
	\end{definition}
	
	\begin{definition}[Poison modal equivalence]
		Two pointed models $(\modelPML, w)$ and $(\modelPML', w')$ 	are poison modally equivalent---in symbols, $(\modelPML, w) \equivModel (\modelPML', w')$---if and only if, $\forall \phi \in \languagePML$:
		\[ 
		(\modelPML, w) \models \phi \IFF (\modelPML', w') \models \phi.
		\]
	\end{definition}

	\subsection{Validity and Expressivity: Examples}

	\begin{fact}
		Let $p\in \propSet$ and $\phi, \psi \in \languagePML$. The following formulas are validities of \PML{} (w.r.t. class $\mathfrak{M}^\emptyset$):
		\begin{align}
		\label{eq:validity0}
		& \neg \poisonAtom \land \blacksquare \poisonAtom \\
		\label{eq:validity8}
		&\square \bot \rightarrow \blacksquare \phi \\
		\label{eq:validity3}
		&\blacksquare p \leftrightarrow \square p \\
		\label{eq:validity4}
		&\square \poisonAtom \rightarrow (\blacksquare \phi \leftrightarrow \square \phi) \\
		\label{eq:validity6_wedgeAxiom}
		&\blacksquare (\phi \wedge \psi) \leftrightarrow (\blacksquare \phi \wedge \blacksquare \psi) \\
		\label{eq:validity7_negAxiom}
		&\blacksquare \neg \phi \rightarrow (\square \bot \vee \neg \blacksquare \phi)
		\end{align}
	\end{fact}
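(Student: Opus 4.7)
The plan is to verify each of the six validities by a direct semantic argument, starting from an arbitrary $(\modelPML, w) \in \mathfrak{M}^\emptyset$ and unfolding Definition \ref{def:semanticsPML}. Before tackling the items individually, I would isolate one simple observation that does most of the work: the operation $\modelPML \mapsto \modelPML^\poison_v$ only alters $V(\poisonAtom)$, adding $v$ to it. Consequently, for any $p \in \propSet$ (so $p \neq \poisonAtom$) and any state $u$, $u \in V^\modelPML(p) \iff u \in V^{\modelPML^\poison_v}(p)$; and if $v \in V^\modelPML(\poisonAtom)$ already, then $\modelPML^\poison_v = \modelPML$ as models, so the two models are semantically indistinguishable everywhere.

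I would address (\ref{eq:validity0}) first. Since we restrict to $\mathfrak{M}^\emptyset$, $V^\modelPML(\poisonAtom) = \emptyset$, hence $w \notin V(\poisonAtom)$ and $(\modelPML, w) \not\models \poisonAtom$, giving $\neg \poisonAtom$. For the right conjunct $\blacksquare \poisonAtom$, I take any successor $v$ of $w$; by construction $v \in V^{\modelPML^\poison_v}(\poisonAtom)$, so $(\modelPML^\poison_v, v) \models \poisonAtom$, and since this holds for every successor, $\blacksquare \poisonAtom$ follows. Next, (\ref{eq:validity8}) is immediate: if $w$ has no $R$-successors, then both the universal clauses defining $\square$ and $\blacksquare$ are vacuously satisfied, so $\blacksquare \phi$ holds. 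Then (\ref{eq:validity3}) is a direct corollary of the preliminary observation applied to $p \in \propSet$: the truth value of the atom $p$ at any $v$ is preserved by poisoning, so $(\modelPML^\poison_v, v) \models p \iff (\modelPML, v) \models p$, and universally quantifying over successors yields the equivalence $\blacksquare p \leftrightarrow \square p$.

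For (\ref{eq:validity4}), assume $(\modelPML, w) \models \square \poisonAtom$. Then every successor $v$ of $w$ satisfies $v \in V^\modelPML(\poisonAtom)$, and by the preliminary observation $\modelPML^\poison_v = \modelPML$. Therefore $(\modelPML^\poison_v, v) \models \phi \iff (\modelPML, v) \models \phi$, and quantifying universally gives $\blacksquare \phi \leftrightarrow \square \phi$. Validity (\ref{eq:validity6_wedgeAxiom}) is the usual K-distribution for the universal reading of $\blacksquare$: spelling out $\blacksquare(\phi \wedge \psi)$ gives ``for all successors $v$, $(\modelPML^\poison_v, v) \models \phi$ and $(\modelPML^\poison_v, v) \models \psi$'', which by the standard commutation of universal quantification and conjunction coincides with $\blacksquare \phi \wedge \blacksquare \psi$; the only thing to note is that the poisoned model depends on $v$ but not on the subformula, so the two conjuncts can genuinely be separated.

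Finally, for (\ref{eq:validity7_negAxiom}), I proceed by cases on whether $w$ has any successor. If it does not, $\square \bot$ holds and the consequent is true. Otherwise pick any successor $v$; from $\blacksquare \neg \phi$ we get $(\modelPML^\poison_v, v) \models \neg \phi$, i.e. $(\modelPML^\poison_v, v) \not\models \phi$, which witnesses $\blacklozenge \neg \phi$, i.e. $\neg \blacksquare \phi$, so the right disjunct holds. I do not expect any real obstacle here, as everything reduces to unfolding the semantic clause for $\blacklozenge$ and tracking the two-line preliminary observation about what poisoning does and does not change; the only place to be vigilant is the separation of the poison atom from ordinary atoms in (\ref{eq:validity3}) and the case split in (\ref{eq:validity7_negAxiom}), both of which are small but essential.
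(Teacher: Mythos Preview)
Your proposal is correct; each of the six items is handled by a sound semantic unfolding, and the preliminary observation about what $\modelPML \mapsto \modelPML^\poison_v$ changes is exactly the right lever. The paper itself omits the proofs of this Fact entirely, so there is no alternative argument to compare against; your direct verification is the expected route. One small remark on (\ref{eq:validity4}): since you start in $\mathfrak{M}^\emptyset$, the antecedent $\square\poisonAtom$ forces $w$ to have no successors (as $V(\poisonAtom)=\emptyset$), so your general argument via $\modelPML^\poison_v=\modelPML$ collapses to the vacuous case already covered by (\ref{eq:validity8}); this is not an error, just an over-general justification.
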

	Proofs are omitted. It can also be immediately noticed that \PML{} is not closed under uniform substitution. For instance, the schematic version $\blacksquare \phi \leftrightarrow \square \phi$ of Formula \eqref{eq:validity3} is clearly invalid.

	\medskip
	
	To illustrate the expressive power of \PML, we show that it is possible to express the existence of circuits \cite[page 4]{bollobas2012graph} in the modal frame, a property not expressible in the standard modal language. A circuit is a sequence of nodes such that two consecutive nodes are adjacent and the first and the last nodes are the same, compared to a cycle we allow here for any node to appear multiple times. Consider the class of formulas $\delta_n$, with $n \in \mathbb{N}_{>0}$, defined inductively as follows, with $i < n$: [Base] $\delta_1 = \lozenge\poisonAtom$; [Step] $\delta_{i+1} = \lozenge (\neg \poisonAtom \wedge \delta_i)$.
	
	\begin{fact}
		\label{prop:CycleDetection}
		Let $(\modelPML, w) \in \mathfrak{M}^\emptyset$ with $\modelPML = (W, R, V)$, then for $i, n \in \mathbb{N}_{>0}$  there exists $w\in W$ such that $(\modelPML, w) \models \blacklozenge \delta_n $ if and only if there exists a circuit \cite[page 4]{bollobas2012graph} of length $n$ in the frame $(W, R)$.
	\end{fact}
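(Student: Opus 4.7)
The plan is to reduce the claim to a clean inductive statement about where $\delta_n$ can be satisfied once a single state has been poisoned. Unpacking the semantics of $\blacklozenge$, we have $(\modelPML, w) \models \blacklozenge \delta_n$ iff there is a successor $v$ of $w$ such that $(\modelPML_v^\poison, v) \models \delta_n$; and because $(\modelPML, w) \in \mathfrak{M}^\emptyset$, the poison valuation in $\modelPML_v^\poison$ is exactly $\{v\}$. The key observation that makes the argument clean is that $\delta_n$ itself contains no occurrence of $\blacklozenge$, so the valuation of $\poisonAtom$ is frozen to $\{v\}$ throughout its evaluation, and subformulas of shape $\neg \poisonAtom$ therefore just read as ``state distinct from $v$''.

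The central lemma I would establish, by induction on $n \geq 1$, is the following: for any Kripke model $\modelPML^{\ast} = (W, R, V^{\ast})$ with $V^{\ast}(\poisonAtom) = \{v\}$ and any $u \in W$, $(\modelPML^{\ast}, u) \models \delta_n$ if and only if there exists a walk $u = u_0, u_1, \ldots, u_n = v$ in $(W,R)$ with $u_i \neq v$ for all $1 \leq i \leq n-1$. The base case $n = 1$ is immediate since $\delta_1 = \lozenge \poisonAtom$ just asks for a successor in $V^{\ast}(\poisonAtom) = \{v\}$. For the inductive step, $\delta_{n+1} = \lozenge(\neg \poisonAtom \wedge \delta_n)$ demands a successor $u_1 \neq v$ at which $\delta_n$ holds; the induction hypothesis applied at $u_1$ yields a walk of length $n$ from $u_1$ back to $v$ with intermediate states avoiding $v$, and prepending the edge $u \to u_1$ gives a walk of length $n+1$ with the required property. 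The main obstacle---and really the only delicate point---is guessing the right strengthened induction hypothesis: one must state it for an arbitrary starting point $u$, not merely for $u = v$, otherwise the recursive call on the successor $u_1$ cannot use the IH.

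Instantiating the lemma at $u = v$ and combining with the semantic unfolding of $\blacklozenge$ delivers the fact. Indeed, $(\modelPML, w) \models \blacklozenge \delta_n$ iff there is a successor $v$ of $w$ and a walk $v, u_1, \ldots, u_{n-1}, v$ whose intermediate nodes differ from $v$, which is exactly a circuit of length $n$ through $v$ in the sense defined by the paper (a closed sequence of adjacent nodes, where repetitions among the intermediate nodes are allowed). For the outer existential quantifier over $w$, it suffices to note that any node $v$ lying on a circuit has a predecessor on that circuit, which can serve as witness $w$; conversely, from a witnessing $w$ and the associated $v$ we directly read off a circuit of length $n$. Since the paper's notion of circuit coincides with a closed walk of the stated length, no further graph-theoretic bookkeeping is required.
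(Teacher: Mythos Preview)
Your inductive lemma is correct and sharpens what the paper's sketch leaves informal: in a model with $V(\poisonAtom)=\{v\}$, the formula $\delta_n$ holds at $u$ precisely when there is a walk of length $n$ from $u$ to $v$ whose intermediate vertices all avoid $v$. Instantiated at $u=v$, this delivers the left-to-right direction cleanly, since any such closed walk is in particular a circuit of length $n$. This is essentially the paper's argument, made precise.

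The gap is in the converse. You write that a walk $v,u_1,\ldots,u_{n-1},v$ with each $u_i\neq v$ ``is exactly a circuit of length $n$ through $v$'', but a circuit in the paper's sense is merely a closed walk and may revisit its base point, so the identification goes only one way. Concretely: a single reflexive point admits circuits of every length $n\geq 1$, yet $\blacklozenge\delta_n$ holds there only for $n=1$; and on the two-point frame with edges $a\to b$, $b\to a$ there is a circuit $a,b,a,b,a$ of length $4$, while $\blacklozenge\delta_4$ fails everywhere because any length-$4$ walk returning to the poisoned node must pass through it at step $2$. Hence your closing step (``any node $v$ lying on a circuit has a predecessor on that circuit, which can serve as witness $w$'') does not supply what your own lemma requires. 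This is less a flaw in your method than in the stated biconditional: the paper's proof is a brief sketch whose last line concludes only that ``a cycle exists whose length $i$ is $n$ or smaller'' and does not argue the converse either. What your lemma actually characterises is the existence of a closed walk of length $n$ that meets its base point only at the endpoints; you should either reformulate the Fact to say this, or retain only the left-to-right implication.
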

	\begin{proof}
		Observe that the formula $\blacklozenge \delta_n$ has only one occurrence of the poison modality $\blacklozenge$.
		As $V(\poisonAtom) = \emptyset$ by assumption, the only poisoned state when we go through the formula is the one poisoned by $\blacklozenge$. The formula then states that one can reach that unique poisoned state without passing through other poisoned states in $n$ steps. It follows that a cycle exists whose length $i$ is $n$ or smaller.
	\end{proof}
	
	A direct consequence of Fact \ref{prop:CycleDetection} is that \PML{} is not bisimulation invariant. In particular, its formulas are not preserved by tree-unravelings and it does not enjoy the tree model property.

	\subsection{Winning Strategies of the Poison Game}

	\PML{} can express winning positions (that is, states in a graph in which a player has a winning strategy) in a natural way. Given a frame $\tuple{W,R}$, nodes satisfying formulas $\blacklozenge \square \poisonAtom$ are winning for $\OPP$ as she can move to a dead end for $\PRO$. So are also nodes satisfying formula $\blacklozenge \square \blacklozenge \square \poisonAtom$: she can move to a node in which, no matter which successor $\PRO$ chooses, she can then push her to a dead end. In general, winning positions for $\OPP$ are defined by the following infinitary $\languagePML$-formula:
	\begin{align}
	\blacklozenge \square \poisonAtom \vee \blacklozenge \square \blacklozenge \square \poisonAtom \vee \ldots \label{eq:OPP}
	\end{align}
	Dually, winning positions for $\PRO$ are defined by the following infinitary $\languagePML$-formula:
	\begin{align}
	\blacksquare \lozenge \neg \poisonAtom \land \blacksquare \lozenge \blacksquare \lozenge \neg \poisonAtom \land \ldots \label{eq:PRO}
	\end{align}
	
	\begin{remark}[Credulous admissibility and \PML] \label{rem:credadm}
		By Theorem \ref{thm:DuchetMeyniel}, formula \eqref{eq:PRO}, interpreted on the inversion of an argumentation framework, expresses the property ``there exist credulously admissible arguments in the framework''. To the best of our knowledge, this is the first modal characterization of the notion, albeit an infinitary one.\footnote{Formulas $\eqref{eq:OPP}$ and $\eqref{eq:PRO}$ call naturally for a fixpoint extension of \PML. Such an extension poses interesting technical challenges very similar to those charted in \cite{Aucher_al_2017} for a $\mu$-calculus extension of sabotage modal logic.}
	\end{remark}

	\section{Expressivity of \PML} \label{sec:FOL}

	\subsection{Translation into First-Order Logic}
	
	Let $\lanFOL$ be the language of the binary fragment of first-order logic (\FOL) with equality. We present here a translation of the language of \PML{} into $\lanFOL$.
	\begin{definition}[\FOL{}translation]
		\label{def:FOLTrans}
		Let $p, q, \ldots \in \propSet$ be propositional atoms, we call $P, Q, \ldots$ their corresponding first-order predicate. The first-order predicate for the poison atom $\poisonAtom$ is $\poisonAtomPred$.
		Let $N$ be a finite set of variables, and $x$ a designated variable, the translation $ST^N_x : \languagePML \rightarrow \lanFOL$ is defined inductively as follows:
		\begin{align*}
		ST^N_x(p) & = P(x), \forall p \in \propSet \\
		ST^N_x(\neg \phi) & = \neg ST^N_x(\phi) \\
		ST^N_x(\phi \wedge \psi) & = ST^N_x(\phi) \wedge ST^N_x(\psi) \\
		ST^N_x(\lozenge \phi) & = \exists y\left(xRy \wedge ST^N_y(\phi)\right) \\
		ST^N_x(\blacklozenge \phi) & = \exists y\left(xRy \wedge ST^{N \cup \{y\}}_y(\phi)\right) \\
		ST^N_x(\poisonAtom) & = \poisonAtomPred(x) \vee \bigvee_{y \in N} (y = x).
		\end{align*}
	\end{definition}
	The definition is naturally extended to inputs consisting of sets of formulas. Let us briefly comment on the translation. A state is poisoned either if it is in the valuation of $\poisonAtom$, or if it has been poisoned by traversing a link instantiating the semantics of the poison modality, in which case the world is added to $N$ which `book-keeps' the set of poisoned states.
	It is worth noticing that the translation does not, in general, returns a formula with only one free variable. It does so, however, when $N$ is set to $\emptyset$. We move now to proving that the translation is correct.
	
	\begin{lemma}For a model $\modelPML$ and an assignment $g$:
		\label{lem:FOLTrans}
		$$\modelPML^\poison_w \models ST^N_x(\phi)[g] \Longleftrightarrow \modelPML \models ST^{N\cup \{y\}}_x(\phi)[g_{y:=w}].$$
	\end{lemma}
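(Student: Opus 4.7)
The plan is to proceed by structural induction on $\phi$, under the standing freshness assumption that $y \notin N$ and $y$ does not occur (free or bound) in $ST^N_x(\phi)$; this freshness is essential to avoid variable capture in the translation and to guarantee $g_{y:=w}(z) = g(z)$ for every $z$ relevant to the translation of $\phi$.

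For the base cases, if $\phi = p$ with $p \in \propSet$, then $ST^N_x(p) = P(x)$ is independent of $N$ and poisoning affects only the valuation of $\poisonAtom$, so both sides reduce to $g(x) \in V(p)$. The crucial base case is $\phi = \poisonAtom$: the left-hand side unfolds to $g(x) \in V(\poisonAtom) \cup \{w\}$ or $g(x) = g(z)$ for some $z \in N$, while the right-hand side unfolds to $g_{y:=w}(x) \in V(\poisonAtom)$ or $g_{y:=w}(x) = g_{y:=w}(z)$ for some $z \in N \cup \{y\}$. By freshness $g_{y:=w}(x) = g(x)$, and the extra disjunct $z = y$ contributes exactly the condition $g(x) = w$, which matches the $\{w\}$ added to $V(\poisonAtom)$ on the left. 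The Boolean cases $\neg$ and $\wedge$ follow immediately from the induction hypothesis.

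The $\lozenge$ case is almost direct: the accessibility relation $R$ is unchanged by poisoning, and freshness gives $g_{y:=w}(x) = g(x)$; applying the induction hypothesis to $\phi'$ with the same $N$ and the witness assignment $g_{z:=a}$ recovers the equivalence. The main case is $\blacklozenge \phi'$, where $ST^N_x(\blacklozenge \phi') = \exists z\,(xRz \wedge ST^{N \cup \{z\}}_z(\phi'))$. Here the induction hypothesis must be instantiated at the \emph{enlarged} bookkeeping set $N \cup \{z\}$: picking the witness $a$ for $z$, the induction hypothesis yields
$$\modelPML^\poison_w \models ST^{N\cup\{z\}}_z(\phi')[g_{z:=a}] \;\Longleftrightarrow\; \modelPML \models ST^{N\cup\{z\}\cup\{y\}}_z(\phi')[g_{z:=a,\,y:=w}].$$
The right-hand formula coincides with the $z$-witness clause of $ST^{N \cup \{y\}}_x(\blacklozenge \phi')$ evaluated at $g_{y:=w}$, since union of the bookkeeping set is commutative ($N\cup\{z\}\cup\{y\} = N\cup\{y\}\cup\{z\}$) and assignment extensions on distinct variables commute ($g_{z:=a,y:=w} = g_{y:=w,z:=a}$).

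The main obstacle is purely a matter of bookkeeping: we must ensure that $y$ remains fresh after $\blacklozenge$ introduces the new bound variable $z$ (so we need $y \neq z$, which can be arranged by renaming), and we must be comfortable applying the induction hypothesis with $N$ replaced by $N \cup \{z\}$ rather than $N$ itself. Once this discipline is in place the verification is routine, and the lemma immediately yields, at $N = \emptyset$, the correctness of the standard translation $ST^\emptyset_x$ from \PML{} to \FOL.
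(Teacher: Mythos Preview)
Your proposal is correct and follows essentially the same structural induction as the paper, which likewise sketches only the $\poisonAtom$ and $\blacklozenge$ cases. Your treatment is in fact more careful about variable freshness: the paper reuses the name $y$ both for the lemma's bookkeeping variable and for the bound variable introduced by the $\blacklozenge$-translation, leaving the needed renaming implicit, whereas you spell it out.
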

	\begin{proof}[Sketch of proof]
		We prove the lemma by induction on the structure of $\phi$ (standard cases are omitted).
		
		\fbox{$\phi = \poisonAtom$} This case (part of the induction base) is established by the following series of equivalences, using the definitions of the $^\poison$ operation on models and of the standard translation. 
		\begin{align*}
		\modelPML^\poison_w \models ST^N_x(\poisonAtom)[g] & \Leftrightarrow \modelPML^\poison_w \models \left(\poisonAtomPred(x) \vee \bigvee_{y \in N} (y = x)\right)[g] \\
		& \Leftrightarrow \modelPML \models \left(\poisonAtomPred(x) \vee \bigvee_{y \in N \cup \{w\}} (y = x)\right)[g] \\
		& \Leftrightarrow \modelPML \models ST_x^{N \cup \{y\}}(\poisonAtom)[g_{y := w}].
		\end{align*}
		
		\fbox{$\phi = \blacklozenge \psi$} with $\psi \in \languagePML$. The case is established by the following series of equivalences, using the definitions of the $^\poison$ operation on models, of the standard translation, the semantics of $\blacklozenge$ and $\land$, and the induction hypothesis. 
		\begin{align*}
		\modelPML^\poison_w \models & ST^N_x(\blacklozenge \psi)[g] 
		\Leftrightarrow \modelPML^\poison_w \models \exists y \left(xRy \wedge ST_y^{N\cup\{y\}}(\psi)\right)[g] \\ %
		& \Leftrightarrow \exists v, g(x)Rv,  \modelPML^\poison_w \models ST^{N \cup \{y\}}_y(\psi)[g_{y := v}] \\
		& \Leftrightarrow \exists v, g(x)Rv, \modelPML \models ST^{N \cup \{y, z\}}_y(\psi)[g_{y := v, z := w}] \\
		& \Leftrightarrow \modelPML \models \exists y \left(xRy \wedge ST_y^{N\cup\{y, z\}(\psi)}\right)[g_{z := w}] \\
		& \Leftrightarrow \modelPML \models ST^{N\cup\{z\}}_x(\blacklozenge \psi)[g_{z := w}]. \\
		\end{align*}
		This completes the proof.	
	\end{proof}

	\begin{theorem}
		\label{thm:correctTranslFOL}
		Let $(\modelPML, w)$ be a pointed model and $\phi \in \languagePML$ a formula, we have then:
		$$(\modelPML, w) \models \phi \Longleftrightarrow \modelPML \models ST^\emptyset_x(\phi)[x:=w].$$
	\end{theorem}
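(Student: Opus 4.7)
The plan is to prove the theorem by structural induction on $\phi$, relying crucially on Lemma \ref{lem:FOLTrans} in the $\blacklozenge$ case. I specialize to $N = \emptyset$ throughout the statement, since the lemma is precisely designed to bridge the model-modification semantics of $\blacklozenge$ with the bookkeeping set $N$ appearing in the translation.

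For the base cases, $\phi = p$ with $p \in \propSet$ is immediate from the definitions of $V$ and of the predicate $P$. For $\phi = \poisonAtom$, since $N = \emptyset$ the disjunction $\bigvee_{y\in N}(y=x)$ is empty, so $ST^\emptyset_x(\poisonAtom)$ reduces to $\poisonAtomPred(x)$ and the equivalence collapses to the definition of the valuation of $\poisonAtom$. The Boolean cases ($\neg$ and $\wedge$) and the standard diamond case $\lozenge \psi$ follow the familiar pattern of the van Benthem translation: one unfolds the modal semantics, applies the induction hypothesis pointwise at each successor, and repackages into the corresponding first-order formula.

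The heart of the proof is the case $\phi = \blacklozenge \psi$. Unfolding the semantics, $(\modelPML, w) \models \blacklozenge \psi$ holds iff there exists a successor $v$ of $w$ such that $(\modelPML^\poison_v, v) \models \psi$. Applying the induction hypothesis to $\psi$ at the pointed model $(\modelPML^\poison_v, v)$ gives $\modelPML^\poison_v \models ST^\emptyset_y(\psi)[y := v]$, where I have renamed the designated variable to a fresh $y$. Invoking Lemma \ref{lem:FOLTrans} with $N = \emptyset$, assignment $g$ sending $y \mapsto v$, and poisoned world $v$, this is in turn equivalent to $\modelPML \models ST^{\{y\}}_y(\psi)[y := v]$, which is exactly the body of $\exists y\,(xRy \wedge ST^{\{y\}}_y(\psi))$ witnessed by $v$, evaluated at $x := w$. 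Combining the two directions yields the required equivalence with $ST^\emptyset_x(\blacklozenge\psi)[x:=w]$.

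The main delicate point is the variable-naming discipline: Lemma \ref{lem:FOLTrans} introduces a fresh variable that is added to $N$, while the translation of $\blacklozenge$ already uses a bound variable of its own. A careful writeup would make the freshness convention explicit and appeal silently to $\alpha$-conversion so that the variable introduced by the lemma can be identified with the one quantified by the translation of $\blacklozenge$. Once that bookkeeping is in place, the theorem falls out as a direct corollary of Lemma \ref{lem:FOLTrans}, which is doing all the substantive work of reconciling the two equivalent ways of recording which states have been poisoned.
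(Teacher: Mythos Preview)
Your proposal is correct and follows essentially the same route as the paper: structural induction on $\phi$, with the $\poisonAtom$ base case reducing to $\poisonAtomPred(x)$ because the disjunction over $N=\emptyset$ is empty, and the $\blacklozenge$ case handled by first applying the induction hypothesis at the poisoned model and then invoking Lemma~\ref{lem:FOLTrans} to trade the model modification for an enlarged bookkeeping set. Your explicit discussion of the $\alpha$-conversion needed to identify the fresh variable from the lemma with the bound variable of $ST^\emptyset_x(\blacklozenge\psi)$ is in fact more careful than the paper's own sketch, which glosses over this point.
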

	\begin{proof}[Sketch of proof]
		The proof is by induction on the structure of $\phi$ (standard cases are omitted). 
		
		\fbox{$\phi = \poisonAtom$} This case (part of the base case) is established through the following series of simple equivalences:
		\begin{align*}
		(\modelPML, w) \models \poisonAtom & \Leftrightarrow \modelPML \models \poisonAtomPred(x)[x:=w] \\
		& \Leftrightarrow \modelPML \models \poisonAtomPred(x) \vee \bigvee_{y \in \emptyset} (y = x) [x:=w] \\
		& \Leftrightarrow \modelPML \models ST^\emptyset_x(\poisonAtom)[x := w].
		\end{align*}
		
		\fbox{$\phi = \blacklozenge \psi$} with $\psi \in \languagePML$. This case is established through the following series of equivalences, using the semantics of $\blacklozenge$, the definition of the standard translation and Lemma \ref{lem:FOLTrans}:
		\begin{align*}
		(\modelPML, w) \models \blacklozenge \psi & \Leftrightarrow \exists v, wRv, (\modelPML^\poison_v, w) \models \psi \\
		& \Leftrightarrow \exists v, wRv, \modelPML^\poison_v \models ST^\emptyset_x(\psi)[x := w] \\
		& \Leftrightarrow \exists v, wRv, \modelPML \models ST^{\{y\}}_x(\psi)[x := w, y := v] \\ %
		& \Leftrightarrow \modelPML \models \exists y \left(xRy \wedge ST^{\{y\}}_x(\psi)\right)[x := w] \\
		& \Leftrightarrow \modelPML \models ST^\emptyset_x(\blacklozenge \psi)[x := w].
		\end{align*}
		This completes the proof.
	\end{proof}

	\subsection{Poison Bisimulation} \label{sec:PoisonBisi}
	
	As observed earlier \PML{} is not bisimulation invariant. In what follows we define a notion of bisimulation tailored to \PML.
	
	\begin{definition}[p-bisimulation]
		\label{def:pbisimulation}
		Two pointed models $(\modelPML_1, w_1)$ and $(\modelPML_2, w_2)$ are said to be p-bisimilar, written $(\modelPML_1, w_1) \bisimulation (\modelPML_2, w_2)$, if there exists a relation $Z \subseteq W^{\modelPML_1} \times W^{\modelPML_2}$ (the p-bisimulation relation) such that $w_1 Z w_2$ and, for any states $w \in W^{\modelPML_1}$ and $v \in W^{\modelPML_2}$, whenever $wZv$ the following clauses are satisfied:
		\begin{description}
			\item[Atom:] For any atom $p \in \propSet\cup\{\poisonAtom\}$, $w \in V^{\modelPML_1}(p)$ iff $v \in V^{\modelPML_2}(p)$.
			
			\item[Zig$_\lozenge$:] If there exists $w' \in W^{\modelPML_1}$ such that $wR^{\modelPML_1}w'$ then there exists $v' \in W^{\modelPML_2}$ such that $vR^{\modelPML_2}v'$ and $(\modelPML_1, w')Z(\modelPML_2, v')$.
			
			\item[Zag$_\lozenge$:] If there exists $v' \in W^{\modelPML_2}$ such that $vR^{\modelPML_2}v'$ then there exists $w' \in W^{\modelPML_1}$ such that $wR^{\modelPML_1}w'$ and $(\modelPML_1, w')Z(\modelPML_2, v')$.
			
			\item[Zig$_\blacklozenge$:] If there exists $(\modelPML_1', w_1')$ such that $(\modelPML_1, w_1) \poisonRelation (\modelPML_1', w_1')$, then there exists $(\modelPML_2', w_2')$ such that $(\modelPML_2, w_2) \poisonRelation (\modelPML_2', w_2')$ and $(\modelPML_1', w_1')Z(\modelPML_2', w_2')$.
			
			\item[Zag$_\blacklozenge$:] If there exists $(\modelPML_2', w_2')$ such that $(\modelPML_2, w_2) \poisonRelation (\modelPML_2', w_2')$, then there exists $(\modelPML_1', w_1')$ such that $(\modelPML_1, w_1) \poisonRelation (\modelPML_1', w_1')$ and $(\modelPML_1', w_1')Z(\modelPML_2', w_2')$.
		\end{description}
	\end{definition}
	
	An example of a p-bisimulation relation is depicted in Figure \ref{fig:exBisi2}.
	Observe that, unlike for bisimulation in the standard modal language, p-bisimulation involves transitions between pointed models in the clauses Zig$_\blacklozenge$ and Zag$_\blacklozenge$. For a simple example of two models which are not p-bisimilar consider a model consisting of just one reflexive point, and its unraveling in an infinite chain. 

	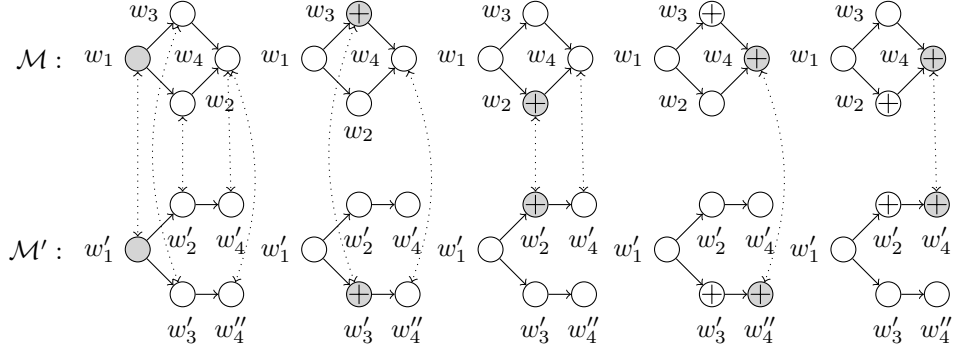
\begin{figure*}[t]
		\centering
		\begin{tikzpicture}[node distance = .3cm]
		\node[draw, circle, label = {[label distance= .005cm]180:$w_3$}] (M0) {};
		\node[draw = none, below = of M0] (M1) {};
		\node[draw, circle, fill={rgb:black,1;white,5}, left = of M1, label = {[label distance= .005cm]180:$w_1$}] (M2) {};
		\node[draw, circle, right = of M1, label = {[label distance= .005cm]180:$w_4$}] (M3) {};
		\node[draw, circle, below = of M1, label = {[label distance= .005cm]0:$w_2$}] (M4) {};
		\node[draw = none, left = .7cm of M2] (M) {$\modelPML:$};
		
		\path[->] (M2) edge [] node {} (M0);
		\path[->] (M2) edge [] node {} (M4);
		\path[->] (M4) edge [] node {} (M3);
		\path[->] (M0) edge [] node {} (M3);

		\node[draw, circle, below = 1cm of M4, label = {[label distance= .005cm]270:$w_2'$}] (N01) {};
		\node[draw, circle, right = of N01, label = {[label distance= .005cm]270:$w_4'$}] (N02) {};
		\node[draw = none, below = of N01] (N0Tmp) {};
		\node[draw, circle, fill={rgb:black,1;white,5}, left = of N0Tmp, label = {[label distance= .005cm]180:$w_1'$}] (N00) {};
		\node[draw, circle, below = of N0Tmp, label = {[label distance= .005cm]270:$w_3'$}] (N03) {};
		\node[draw, circle, right = of N03, label = {[label distance= .005cm]270:$w_4''$}] (N04) {};
		\node[draw = none, left = .7cm of N00] {$\modelPML':$}; 
		
		\path[->] (N00) edge [] node {} (N01);
		\path[->] (N00) edge [] node {} (N03);
		\path[->] (N01) edge [] node {} (N02);
		\path[->] (N03) edge [] node {} (N04);
		
		\path[<->] (M2) edge [dotted] node {} (N00);
		\path[<->] (M3) edge [bend left = 20, dotted] node {} (N04);
		\path[<->] (M4) edge [dotted] node {} (N01);
		\path[<->] (M3) edge [dotted] node {} (N02);
		\path[<->] (M0) edge [bend right = 20, dotted] node {} (N03);

		\node[draw, circle, plus, fill={rgb:black,1;white,5}, label = {[label distance= .005cm]180:$w_3$}, right = 2cm of M0] (M10) {};
		\node[draw = none, below = of M10] (M11) {};
		\node[draw, circle, left = of M11, label = {[label distance= .005cm]180:$w_1$}] (M12) {};
		\node[draw, circle, right = of M11, label = {[label distance= .005cm]180:$w_4$}] (M13) {};
		\node[draw, circle, below = of M11, label = {[label distance= .005cm]270:$w_2$}] (M14) {};
		
		\path[->] (M12) edge [] node {} (M14);
		\path[->] (M12) edge [] node {} (M10);
		\path[->] (M14) edge [] node {} (M13);
		\path[->] (M10) edge [] node {} (M13);

		\node[draw, circle, below = 1cm of M14, label = {[label distance= .005cm]270:$w_2'$}] (N11) {};
		\node[draw, circle, right = of N11, label = {[label distance= .005cm]270:$w_4'$}] (N12) {};
		\node[draw = none, below = of N11] (N1Tmp) {};
		\node[draw, circle, left = of N1Tmp, label = {[label distance= .005cm]180:$w_1'$}] (N10) {};
		\node[draw, circle, plus, fill={rgb:black,1;white,5}, below = of N1Tmp, label = {[label distance= .005cm]270:$w_3'$}] (N13) {};
		\node[draw, circle, right = of N13, label = {[label distance= .005cm]270:$w_4''$}] (N14) {};
		
		\path[->] (N10) edge [] node {} (N11);
		\path[->] (N10) edge [] node {} (N13);
		\path[->] (N11) edge [] node {} (N12);
		\path[->] (N13) edge [] node {} (N14);
		
		\path[<->] (M10) edge [dotted, bend right = 20] node {} (N13);
		\path[<->] (M13) edge [dotted, bend left = 20] node {} (N14);

		\node[draw, circle, right = 2cm of M10, label = {[label distance= .005cm]180:$w_3$}] (M20) {};
		\node[draw = none, below = of M20] (M21) {};
		\node[draw, circle, left = of M21, label = {[label distance= .005cm]180:$w_1$}] (M22) {};
		\node[draw, circle, right = of M21, label = {[label distance= .005cm]180:$w_4$}] (M23) {};
		\node[draw, circle, plus, fill={rgb:black,1;white,5}, below = of M21, label = {[label distance= .005cm]180:$w_2$}] (M24) {};
		
		\path[->] (M22) edge [] node {} (M24);
		\path[->] (M22) edge [] node {} (M20);
		\path[->] (M24) edge [] node {} (M23);
		\path[->] (M20) edge [] node {} (M23);

		\node[draw, circle, plus, fill={rgb:black,1;white,5}, below = 1cm of M24, label = {[label distance= .005cm]270:$w_2'$}] (N21) {};
		\node[draw, circle, right = of N21, label = {[label distance= .005cm]270:$w_4'$}] (N22) {};
		\node[draw = none, below = of N21] (N2Tmp) {};
		\node[draw, circle, left = of N2Tmp, label = {[label distance= .005cm]180:$w_1'$}] (N20) {};
		\node[draw, circle, below = of N2Tmp, label = {[label distance= .005cm]270:$w_3'$}] (N23) {};
		\node[draw, circle, right = of N23, label = {[label distance= .005cm]270:$w_4''$}] (N24) {};
		
		\path[->] (N20) edge [] node {} (N21);
		\path[->] (N20) edge [] node {} (N23);
		\path[->] (N21) edge [] node {} (N22);
		\path[->] (N23) edge [] node {} (N24);
		
		\path[<->] (M24) edge [dotted] node {} (N21);
		\path[<->] (M23) edge [dotted] node {} (N22);

		\node[draw, circle, plus, right = 2cm of M20, label = {[label distance= .005cm]180:$w_3$}] (M30) {};
		\node[draw = none, below = of M30] (M31) {};
		\node[draw, circle, left = of M31, label = {[label distance= .005cm]180:$w_1$}] (M32) {};
		\node[draw, circle, plus, fill={rgb:black,1;white,5}, right = of M31, label = {[label distance= .005cm]180:$w_4$}] (M33) {};
		\node[draw, circle, below = of M31, label = {[label distance= .005cm]180:$w_2$}] (M34) {};
		
		\path[->] (M32) edge [] node {} (M34);
		\path[->] (M32) edge [] node {} (M30);
		\path[->] (M34) edge [] node {} (M33);
		\path[->] (M30) edge [] node {} (M33);

		\node[draw, circle, below = 1cm of M34, label = {[label distance= .005cm]270:$w_2'$}] (N31) {};
		\node[draw, circle, right = of N31, label = {[label distance= .005cm]270:$w_4'$}] (N32) {};
		\node[draw = none, below = of N31] (N3Tmp) {};
		\node[draw, circle, left = of N3Tmp, label = {[label distance= .005cm]180:$w_1'$}] (N30) {};
		\node[draw, circle, plus, below = of N3Tmp, label = {[label distance= .005cm]270:$w_3'$}] (N33) {};
		\node[draw, circle, plus, fill={rgb:black,1;white,5}, right = of N33, label = {[label distance= .005cm]270:$w_4''$}] (N34) {};
		
		\path[->] (N30) edge [] node {} (N31);
		\path[->] (N30) edge [] node {} (N33);
		\path[->] (N31) edge [] node {} (N32);
		\path[->] (N33) edge [] node {} (N34);
		
		\path[<->] (M33) edge [dotted, bend left = 20] node {} (N34);

		\node[draw, circle, right = 2cm of M30, label = {[label distance= .005cm]180:$w_3$}] (M40) {};
		\node[draw = none, below = of M40] (M41) {};
		\node[draw, circle, left = of M41, label = {[label distance= .005cm]180:$w_1$}] (M42) {};
		\node[draw, circle, plus, fill={rgb:black,1;white,5}, right = of M41, label = {[label distance= .005cm]180:$w_4$}] (M43) {};
		\node[draw, circle, plus, below = of M41, label = {[label distance= .005cm]180:$w_2$}] (M44) {};
		
		\path[->] (M42) edge [] node {} (M44);
		\path[->] (M42) edge [] node {} (M40);
		\path[->] (M44) edge [] node {} (M43);
		\path[->] (M40) edge [] node {} (M43);

		\node[draw, circle, plus, below = 1cm of M44, label = {[label distance= .005cm]270:$w_2'$}] (N41) {};
		\node[draw, circle, plus, fill={rgb:black,1;white,5}, right = of N41, label = {[label distance= .005cm]270:$w_4'$}] (N42) {};
		\node[draw = none, below = of N41] (N4Tmp) {};
		\node[draw, circle, left = of N4Tmp, label = {[label distance= .005cm]180:$w_1'$}] (N40) {};
		\node[draw, circle, below = of N4Tmp, label = {[label distance= .005cm]270:$w_3'$}] (N43) {};
		\node[draw, circle, right = of N43, label = {[label distance= .005cm]270:$w_4''$}] (N44) {};
		
		\path[->] (N40) edge [] node {} (N41);
		\path[->] (N40) edge [] node {} (N43);
		\path[->] (N41) edge [] node {} (N42);
		\path[->] (N43) edge [] node {} (N44);
		
		\path[<->] (M43) edge [dotted] node {} (N42);
		\end{tikzpicture}
		\caption{Two p-bisimilar models $\Model$ and $\Model'$ (leftmost models), and the models reachable via poisoning from them. The shadow world is the pointed one and crosses denote poisoned states. Dotted lines represent links of the p-bisimulation instantiating conditions of Definition \ref{def:pbisimulation}.}
		\label{fig:exBisi2}
	\end{figure*}

	\begin{remark}[Argumentation and p-bisimulation]
		As argued in \cite{gabbay14when}, modal bisimulation formalizes a natural notion of similarity of argumentation frameworks that preserves important argumentation-theoretic notions.
		It has for instance been shown \cite[Th. 6]{grossi10logic} that, given two totally bisimilar models $\Model_1$ and $\Model_2$, a set of arguments denoted by $p$ in $\Model_1$ is admissible (respectively, complete, stable or grounded) in the frame of $\Model_1$, if and only if the set of arguments denoted by $p$ in $\Model_2$ is admissible (respectively, complete, stable or grounded) in the frame of $\Model_2$. Which strengthening of the notion of bisimulation is needed to guarantee the preservation of credulous admissibility across frameworks was mentioned as an open question in \cite{gabbay14when}. P-bisimulation provides an elegant answer.
	\end{remark}

	\subsection{Characterization}
	
	The aim of this section is to establish a characterization theorem (Theorem \ref{th:characterization}) in the tradition of \cite{VanBenthem_1983}. The standard proof methods can be adapted easily to fit \PML.
	We start by precisely relating p-bisimulation with poison modal equivalence.
	
	\begin{theorem}
		\label{thm:pbisi->equiv}
		For two pointed models $(\modelPML_1, w_1)$ and $(\modelPML_2, w_2)$, if $(\modelPML_1, w_1) \bisimulation (\modelPML_2, w_2)$ then $(\modelPML_1, w_1) \equivModel (\modelPML_2, w_2)$.
	\end{theorem}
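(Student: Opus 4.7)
The plan is to prove the theorem by structural induction on $\phi \in \languagePML$, strengthening the induction hypothesis to range over all pairs of p-bisimilar pointed models rather than the fixed pair $(\modelPML_1,w_1),(\modelPML_2,w_2)$. This strengthening is essential because the $\blacklozenge$ case forces us to consider freshly poisoned pointed models, which are different semantic objects. Formally the statement to be proved inductively is: for every $\phi \in \languagePML$ and every pair of pointed models with $(N_1,u_1)\bisimulation(N_2,u_2)$, $(N_1,u_1)\models\phi$ iff $(N_2,u_2)\models\phi$.

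The base cases are immediate from the \textbf{Atom} clause of Definition \ref{def:pbisimulation}, which explicitly quantifies over the poison atom $\poisonAtom$ as well as ordinary atoms in $\propSet$. The Boolean cases ($\neg$, $\wedge$) are entirely standard. For $\phi = \lozenge\psi$, I would run the textbook argument: a witness successor on one side is matched by \textbf{Zig}$_\lozenge$ (resp.\ \textbf{Zag}$_\lozenge$) with a $Z$-related successor on the other side, and the induction hypothesis applied to $\psi$ closes the case.

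The only non-routine case is $\phi = \blacklozenge\psi$. Suppose $(\modelPML_1,w_1)\models\blacklozenge\psi$; by Definition \ref{def:semanticsPML} this means there is $w_1'$ with $w_1 R^{\modelPML_1} w_1'$ and $((\modelPML_1)^\poison_{w_1'},w_1')\models\psi$, i.e.\ $(\modelPML_1,w_1)\poisonRelation((\modelPML_1)^\poison_{w_1'},w_1')$ with the target pointed model satisfying $\psi$. Applying \textbf{Zig}$_\blacklozenge$ yields a poison-related $(\modelPML_2',w_2')$ such that the two poisoned pointed models are again p-bisimilar; then the induction hypothesis on $\psi$ transfers satisfaction and we conclude $(\modelPML_2,w_2)\models\blacklozenge\psi$. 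The converse direction uses \textbf{Zag}$_\blacklozenge$ symmetrically.

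The main conceptual obstacle, and what makes the strengthened induction hypothesis indispensable, is precisely this $\blacklozenge$ case: the poisoning operation $\cdot^\poison$ mutates the valuation of $\poisonAtom$, so we cannot hope to reuse the very same bisimulation relation $Z$ in a naive way—for instance, if $w_1'$ is $Z$-related to two distinct states in $\modelPML_2$, only one of which is chosen as $w_2'$, then $Z$ restricted to the poisoned models may fail the \textbf{Atom} clause at $\poisonAtom$. This is why the Zig/Zag$_\blacklozenge$ clauses are formulated at the level of pointed models rather than states, guaranteeing the \emph{existence} of some p-bisimulation between the poisoned pointed models, which is exactly what the strengthened induction hypothesis consumes.
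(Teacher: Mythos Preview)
Your proposal is correct and follows essentially the same route as the paper's proof: structural induction on $\phi$, with the $\blacklozenge$ case handled via the Zig$_\blacklozenge$/Zag$_\blacklozenge$ clauses and the induction hypothesis applied to the poisoned pointed models. Your explicit strengthening of the induction hypothesis to quantify over all p-bisimilar pairs is exactly what the paper's sketch uses implicitly when it invokes the IH on $(\modelPML_1', w_1')$ and $(\modelPML_2', w_2')$; you have simply made this point more carefully than the paper does.
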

	\begin{proof}[Sketch of proof]
		The proof is by induction on the structure of formulas.
		The base case is covered by the atomic condition of the definition of p-bisimulation. 
		For the inductive case, we provide details only for the $\blacklozenge$ modality. Let $Z$ be the p-bisimulation relation. Suppose that $(\modelPML_1, w_1) \models \blacklozenge \phi$, then given the semantics of $\blacklozenge$, there exists $(\modelPML_1', w_1')$ such that $(\modelPML_1, w_1) \poisonRelation (\modelPML_1', w_1')$ and $(\modelPML_1', w_1') \models \phi$. From the clause Zig$_\blacklozenge$ of a p-bisimulation we know that there exists $(\modelPML_2', w_2')$ such that $(\modelPML_2, w_2) \poisonRelation (\modelPML_2', w_2')$ and $(\modelPML_1, w_1') Z (\modelPML_2, w_2')$. By the induction hypothesis, we have $(\modelPML_1', w_1') \equivModel (\modelPML_2', w_2')$ which brings that $(\modelPML_2', w_2') \models \phi$, from which we conclude $(\modelPML_2, w_2) \models \blacklozenge \phi$. The direction from $(\modelPML_2, w_2) \models \blacklozenge \phi$ to $(\modelPML_1, w_1) \models \blacklozenge \phi$ is similar and uses the Zag$_\blacklozenge$ condition. 
	\end{proof}

	\begin{remark}[Credulous admissibility and p-bisimulation]
		Formula \eqref{eq:PRO} expresses the existence of credulous admissible arguments (Remark \ref{rem:credadm}), and is invariant for p-bisimulation (Theorem \ref{thm:pbisi->equiv}). It directly follows that, given two p-bisimilar pointed models $(\Model_1, w_1)$ and $(\Model_2, w_2)$, the frame of $\Model_1$ contains credulously admissible arguments if and only if the frame of $\Model_2$ does.
	\end{remark}
	
	For the converse result, some auxiliary definitions are needed.\footnote{Cf. \cite[Ch. 2]{Blackburn_2001}.} Let $\modelPML = (W, R, V)$ be a model. A set of \FOL{} formulas $\Gamma(x)$ from $\lanFOL$ with one free variable $x$ is realized by $\modelPML$ if there exists $w \in W$ s.t. $\modelPML \models \Gamma(x)[x:=w]$. We say that $\modelPML$ (viewed as a \FOL{} structure) is $\omega$-saturated if for any finite set $X \subseteq W$, the expansion $\modelPML_X$ realizes every set $\Gamma(x) \in \lanFOL_X$ (i.e., the expansion of $\lanFOL$ with constants for the elements in $X$) whenever every finite subset $\Gamma'(x) \subseteq \Gamma(x)$ is realized in $\modelPML_X$.

	\begin{theorem}
		\label{thm:equiv->pbisi}
		For any two $\omega$-saturated models $(\modelPML_1, w_1)$ and $(\modelPML_2, w_2)$, if $(\modelPML_1, w_1) \equivModel (\modelPML_2, w_2)$ then $(\modelPML_1, w_1) \bisimulation (\modelPML_2, w_2)$.
	\end{theorem}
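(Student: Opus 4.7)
The plan is to adapt the standard Hennessy--Milner argument to \PML, using Lemma \ref{lem:FOLTrans} and Theorem \ref{thm:correctTranslFOL} to reduce the extra clauses for $\blacklozenge$ back to realizability of first-order types in the \emph{un-poisoned} model, where $\omega$-saturation is available. Concretely, I would take $Z$ to be the relation on $\mathfrak{M}\times\mathfrak{M}$ defined by $(\modelPML, u) Z (\modelPML', u')$ iff both $\modelPML$ and $\modelPML'$ are $\omega$-saturated and $(\modelPML, u) \equivModel (\modelPML', u')$. The hypothesis places the given pair in $Z$; it then suffices to verify every clause of Definition \ref{def:pbisimulation} on $Z$.

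The Atom clause is immediate since propositional atoms and $\poisonAtom$ are themselves \PML-formulas. The clauses Zig$_\lozenge$ and Zag$_\lozenge$ go by the textbook saturation argument: given a successor $u_\ast$ of $u$, any finite conjunction of formulas in $\poisonTheory(\modelPML, u_\ast)$ is witnessed among successors of $u'$ (apply $\equivModel$ to $\lozenge$ of that conjunction), so by $\omega$-saturation of $\modelPML'$ (in the expansion naming $u'$) there is a successor realizing the full theory.

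The novel case is Zig$_\blacklozenge$ (Zag$_\blacklozenge$ is symmetric). Given $(\modelPML, u) \poisonRelation (\modelPML^\poison_{u_\ast}, u_\ast)$, I would show that the first-order type
\[
\Gamma(x) \;=\; \{\, u' R x\,\} \;\cup\; \{\, ST^{\{y\}}_x(\phi)[y := x] \;:\; \phi \in \poisonTheory(\modelPML^\poison_{u_\ast}, u_\ast)\,\}
\]
is realized in $\modelPML'$. Finite realizability: for $\phi_1,\dots,\phi_n$ in that theory, $(\modelPML, u) \models \blacklozenge\bigwedge_i \phi_i$, hence by $\equivModel$ also $(\modelPML', u') \models \blacklozenge\bigwedge_i \phi_i$, yielding a successor $v$ of $u'$ with $((\modelPML')^\poison_v, v) \models \bigwedge_i \phi_i$; by Theorem \ref{thm:correctTranslFOL} followed by Lemma \ref{lem:FOLTrans} this rewrites precisely as $\modelPML' \models \bigwedge_i ST^{\{y\}}_x(\phi_i)[y := x]$ at $x := v$, together with $u' R v$. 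By $\omega$-saturation of $\modelPML'$ some $u'_\ast$ realizes all of $\Gamma(x)$, which (inverting the same Lemma) means $((\modelPML')^\poison_{u'_\ast}, u'_\ast) \models \phi$ for every $\phi \in \poisonTheory(\modelPML^\poison_{u_\ast}, u_\ast)$. Since the latter is a complete \PML-theory, this upgrades to full poison modal equivalence of the two poisoned pointed models.

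The main obstacle is closure of $Z$ under the step just taken, i.e., ensuring that the resulting pair $((\modelPML^\poison_{u_\ast}, u_\ast), ((\modelPML')^\poison_{u'_\ast}, u'_\ast))$ is still in $Z$ so the coinductive verification proceeds. This reduces to an auxiliary preservation lemma: $\omega$-saturation is preserved under the operation $\modelPML \mapsto \modelPML^\poison_v$. I would prove this by observing that the only change is re-interpreting $\poisonAtomPred$ as $\poisonAtomPred \cup \{v\}$, which is first-order definable in the one-constant expansion $\modelPML_{\{v\}}$ via $\poisonAtomPred(x) \vee x = c_v$; saturation is preserved both by naming a single element and by such a definable redefinition of an atomic predicate, because every type over the new signature translates to an equi-realizable type over the old one. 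With this lemma the relation $Z$ is closed under the bisimulation clauses, so $(\modelPML_1, w_1) \bisimulation (\modelPML_2, w_2)$ via $Z$.
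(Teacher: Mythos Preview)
Your argument is correct and follows essentially the same Hennessy--Milner strategy as the paper: take poison modal equivalence as the candidate bisimulation, and use the first-order translation together with $\omega$-saturation to realize the full successor type in the $\blacklozenge$ clauses. The one substantive addition on your side is the explicit preservation lemma that $\omega$-saturation survives the operation $\modelPML \mapsto \modelPML^\poison_v$; the paper's sketch simply asserts that $\equivModel$ is a p-bisimulation and applies saturation at each step without justifying why the poisoned models arising along the way remain $\omega$-saturated, so your treatment actually closes a gap the published sketch leaves open.
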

	\begin{proof}[Sketch of proof]
		We show that $\equivModel$ is itself a p-bisimulation. The base case holds trivially. The proof for the Zig$_\lozenge$ and Zag$_\lozenge$ proceed in the usual manner. We need to prove that the conditions Zig$_\blacklozenge$ and Zag$_\blacklozenge$ are verified. \fbox{Zig$_\blacklozenge$} Let us assume that $(\modelPML_1, w_1) \equivModel (\modelPML_2, w_2)$ and that there exists a pointed model $(\modelPML_1', w_1')$ such that $(\modelPML_1, w_1) \poisonRelation (\modelPML_1', w_1')$. We show that there exists $(\modelPML_2', w_2')$ such that $(\modelPML_2, w_2) \poisonRelation (\modelPML_2', w_2')$ and $(\modelPML_1', w_1') \equivModel (\modelPML_2', w_2')$. First, observe that for any finite $\Gamma \subseteq \poisonTheory(\modelPML_1', w_1')$ (see Definition \ref{def:poisonModalTheory} for $\poisonTheory$), by Theorem \ref{thm:correctTranslFOL}, the following equivalences hold:
		\begin{align*}
		(\modelPML_1, w_1) \models & \blacklozenge \bigwedge \Gamma 
		\Leftrightarrow (\modelPML_2, w_2) \models \blacklozenge \bigwedge \Gamma \\
		\Leftrightarrow & \modelPML_2 \models ST_x^\emptyset\left(\blacklozenge \bigwedge \Gamma \right)[x := w_2] \\
		\Leftrightarrow & \modelPML_2 \models \exists y \left(xR^{\modelPML_2}y \wedge ST_y^{\{y\}}\left(\bigwedge \Gamma\right)\right)[x := w_2].
		\end{align*}
		Since $\modelPML_2$ is $\omega$-saturated by assumption, we have:
		\[
		\exists y \in \modelPML_2, \modelPML_2 \models ST_y^{\{y\}}\left(\mathbb{T}^p(\modelPML_1', w_1')\right).
		\]
		
		By Theorem \ref{thm:correctTranslFOL}, there exists a pointed model $(\modelPML_2', w_2')$ such that $(\modelPML_2, w_2) \poisonRelation (\modelPML_2', w_2')$ and 
		$
		\modelPML_2' \models ST_x^\emptyset \left(\poisonTheory (\modelPML_1', w_1') \right) [x := w_2'].
		$
		Again by Theorem \ref{thm:correctTranslFOL}, it follows that $(\modelPML_1', w_1') \equivModel (\modelPML_2', w_2')$. \fbox{Zag$_\blacklozenge$} The proof is by a similar argument.
	\end{proof}

	\begin{theorem}\label{th:characterization}
		An $\lanFOL$ formula is equivalent to the translation of an $\languagePML$ formula if and only if it is invariant for p-bisimulation.
	\end{theorem}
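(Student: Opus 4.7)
The forward direction is immediate from what is already in hand: if $\alpha(x) \equiv ST_x^\emptyset(\phi)$ for some $\phi \in \languagePML$, then whenever $(\Model_1, w_1) \bisimulation (\Model_2, w_2)$, Theorem \ref{thm:pbisi->equiv} yields $(\Model_1, w_1) \equivModel (\Model_2, w_2)$, and Theorem \ref{thm:correctTranslFOL} then gives $\Model_1 \models \alpha[x := w_1]$ iff $\Model_2 \models \alpha[x := w_2]$.

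For the converse I will follow the classical van~Benthem recipe. Let $\alpha(x) \in \lanFOL$ have a single free variable $x$ and be invariant under p-bisimulation. Consider the set of its \PML-consequences,
\[
PMLC(\alpha) = \{\, ST_x^\emptyset(\phi) : \phi \in \languagePML \text{ and } \alpha \models ST_x^\emptyset(\phi) \,\}.
\]
Once I prove that $PMLC(\alpha) \models \alpha$, \FOL-compactness delivers finitely many $\phi_1, \dots, \phi_n$ whose translations already entail $\alpha$. Since $ST_x^\emptyset$ commutes with conjunction and each $ST_x^\emptyset(\phi_i)$ is itself entailed by $\alpha$ by construction, the formula $\phi := \phi_1 \wedge \dots \wedge \phi_n$ witnesses $\alpha \equiv ST_x^\emptyset(\phi)$, closing the argument.

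The heart of the proof is therefore showing $PMLC(\alpha) \models \alpha$. Fix a pointed model $(\Model, w)$ with $\Model \models PMLC(\alpha)[x := w]$. I first argue that $\Sigma := \{ ST_x^\emptyset(\phi) : \phi \in \poisonTheory(\Model, w) \} \cup \{\alpha(x)\}$ is finitely satisfiable. If it were not, then using closure of \PML{} under conjunction there would be some $\phi \in \poisonTheory(\Model, w)$ with $\alpha \models ST_x^\emptyset(\neg \phi)$, so $ST_x^\emptyset(\neg \phi) \in PMLC(\alpha)$, forcing $(\Model, w) \models \neg \phi$, a contradiction. Hence $\Sigma$ has a model $(\Model', w')$ with $\Model' \models \alpha[x := w']$ and $(\Model', w') \equivModel (\Model, w)$. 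Passing to $\omega$-saturated elementary extensions $(\Model^*, w)$ and $(\Model'^*, w')$ preserves both \FOL-truth and, via Theorem \ref{thm:correctTranslFOL}, poison modal equivalence. Theorem \ref{thm:equiv->pbisi} then yields $(\Model^*, w) \bisimulation (\Model'^*, w')$, and invariance of $\alpha$ under p-bisimulation transfers $\Model'^* \models \alpha[x := w']$ to $\Model^* \models \alpha[x := w]$, whence elementarity brings it back to $\Model$.

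The main obstacle will be checking that the standard saturation machinery interacts correctly with the dynamic semantics of $\blacklozenge$: one must confirm that $\omega$-saturated elementary extensions still satisfy the hypothesis of Theorem \ref{thm:equiv->pbisi}. This reduces to the observation that every \PML{} formula is encoded by a first-order formula with a single free variable (the case $N = \emptyset$ of the translation), so \PML-truth at a point is preserved by elementary extensions; the remaining steps are then routine adaptations of the classical argument.
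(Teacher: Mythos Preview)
Your proposal is correct and follows essentially the same van~Benthem-style argument as the paper: the same set of \PML{}-consequences of $\alpha$, the same compactness-based consistency argument for $\Sigma$, and the same detour through $\omega$-saturated elementary extensions combined with Theorems~\ref{thm:pbisi->equiv} and~\ref{thm:equiv->pbisi}. If anything, your write-up is a bit more explicit than the paper's in flagging why poison modal equivalence survives passage to elementary extensions (via the one-free-variable case of Theorem~\ref{thm:correctTranslFOL}), a step the paper leaves implicit.
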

	\begin{proof}
		\fbox{Left-to-right} By Theorem \ref{thm:pbisi->equiv}. \fbox{Right-to-left} Let $\phi \in \lanFOL$ be a formula with only one free variable $x$. Let us assume that $\phi$ is invariant under p-bisimulation and let $\mathcal{C}$ be the following set: $\mathcal{C}(\phi) = \{ST_x^\emptyset(\psi)\mid\psi \in \languagePML\text{ and } \phi \models ST_x^\emptyset(\psi)\}$.
		Let us now show that $\mathcal{C}(\phi) \models \phi$, that is to say: for any pointed model $(\modelPML, w)$ if $\modelPML \models \mathcal{C}(\phi)[x:=w]$ then $\modelPML \models \phi[x:=w]$. 
		To do so, let us first show that $\Sigma = ST_x^\emptyset\left(\poisonTheory(\modelPML, w)\right)\cup\{\phi\}$ is consistent. 
		
		Let us assume for the sake of the contradiction that $\Sigma$ is inconsistent. By the compactness of \FOL{}, $\models \phi \rightarrow \neg \bigwedge \Gamma$ for some finite $\Gamma \subseteq ST_x^\emptyset\left(\poisonTheory(\modelPML, w)\right)$. By the definition of $\mathcal{C}(\phi)$ this means that $\neg \bigwedge \Gamma \in \mathcal{C}(\phi)$ and so $\neg \bigwedge \Gamma \in ST_x^\emptyset\left(\poisonTheory(\modelPML, w)\right)$ which is in contradiction with $\Gamma \subseteq ST_x^\emptyset \left(\poisonTheory(\modelPML, w)\right)$.
		
		Let us now show that $\modelPML \models \phi[x:=w]$. As $\Sigma$ is consistent, there exists a pointed model $(\modelPML', w')$ such that $(\modelPML', w') \models \Sigma$. From this it is immediate to see that $(\modelPML, w) \equivModel (\modelPML', w')$. Let us now consider two $\omega$-saturated elementary extensions $(\modelPML_\omega, w)$ and $(\modelPML_\omega', w')$ of $(\modelPML, w)$ and $(\modelPML', w')$. Such extensions exist by standard argument (cf. \cite[Proposition 3.2.6]{chang73model}). As \FOL{} is invariant under elementary extensions, from $\modelPML'\models \phi[x:=w]$ we can conclude that $\modelPML_\omega'\models \phi[x:=w]$. As we have assumed that $\phi$ is invariant under p-bisimulation and thanks to Theorem \ref{thm:equiv->pbisi}, we get $\modelPML_\omega \models \phi[x:=w]$ which brings $\modelPML \models \phi[x:=w]$. We have thus shown that $\mathcal{C}(\phi) \models \phi$.
		
		To conclude the proof, we need to show that $\mathcal{C}(\phi) \models \phi$ implies that $\phi$ is equivalent to the translation of an $\languagePML$ formula. As $\mathcal{C}(\phi) \models \phi$, from the deduction and the compactness theorems of \FOL, $\models \bigwedge \Gamma \rightarrow \phi$ for some finite $\Gamma \subset \mathcal{C}(\phi)$. By definition of $\mathcal{C}(\phi)$ we also have $\models \phi \rightarrow \bigwedge \Gamma$ and so $\models \phi \leftrightarrow \bigwedge \Gamma$.
	\end{proof}
	Finally, it is worth mentioning a simple example of a property which is not expressible in \PML. Properties of this type are for instance those involving counting quantifiers, e.g.: the current state has at least $n$ successors. It is easy to devise two p-bisimilar models for which the above property holds for one, but not the other. %

	\section{Undecidability} \label{sec:Decidability}
	
	In this section we tackle the question of the decidability of \PML.

	\subsection{Undecidability of $\PML_3$}
	
	In this section we establish the undecidability of $\PML_3$, that is the variant of $\PML$ with three standard modalities and three poison modalities, each with a distinct poison atom that correspond to language $\languagePML_3$. We call $R, R_1$ and $R_2$ the three accessibility relations of a model of $\PML_3$. 
	The satisfaction problem for $\PML_3$ can be defined as follows:
	\begin{description}
		\item[Data:] A $\PML_3$ formula $\phi \in \languagePML_3$.
		\item[Problem:] Is there $(\modelPML, w)$, with $\modelPML \in \mathfrak{M}^\emptyset$, such that $(\modelPML, w) \models \phi$ ?
	\end{description}
	
	\begin{theorem}
		\label{thm:undecidability}
		The satisfaction problem for $\PML_3$ is undecidable.
	\end{theorem}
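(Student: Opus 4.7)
The plan is to reduce from an undecidable tiling problem, namely the $\mathbb{N}\times\mathbb{N}$-tiling problem: given a finite tile set $T$ with horizontal and vertical compatibility relations $H,V\subseteq T\times T$, decide whether $T$ tiles the first quadrant. I would construct a formula $\phi_T\in\languagePML_3$ that is satisfiable in $\mathfrak{M}^\emptyset$ if and only if $T$ admits such a tiling. The three accessibility relations will play distinct roles: $R_1$ will encode the horizontal step, $R_2$ the vertical step, and $R$ will serve an auxiliary role, most likely as a spy-point style relation that lets a single formula universally quantify over grid cells.

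The initial steps are routine. Introduce a propositional atom $p_t$ for each tile type $t\in T$, and write standard modal conjuncts expressing: at every cell exactly one $p_t$ holds; for each $p_t$ the $R_1$-successors satisfy $\bigvee_{(t,t')\in H}p_{t'}$, and similarly $R_2$-successors for $V$; and every cell has at least one successor along $R_1$ and along $R_2$, so the tiling extends indefinitely. To impose these clauses at every cell, I would use $R$ as a universal-modality-like relation: set up a spy world from which every grid cell is $R$-reachable, so that a $\square$ over $R$ propagates the local constraints globally.

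The crux is to force the grid structure itself, namely that $R_1$ and $R_2$ are functional on grid cells and that $R_1R_2=R_2R_1$. This is where the poison modalities are essential. Because we work in $\mathfrak{M}^\emptyset$, a single application of $\blacklozenge_i$ poisons exactly one successor, so after traversing a poison edge the atom $\poisonAtom_i$ holds at a unique world, giving poison atoms the flavour of nominals. I would express functionality of $R_i$ by poisoning one $R_i$-successor and then verifying, via the spy-point, that every $R_i$-successor satisfies $\poisonAtom_i$; and express confluence by poisoning the $R_1R_2$-successor with $\poisonAtom_2$ (having marked the intermediate $R_1$-successor with $\poisonAtom_1$) and then, after jumping through the spy-point to the $R_2$-successor of the original cell, checking that its unique $R_1$-successor carries $\poisonAtom_2$.

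The main obstacle I expect is exactly this confluence encoding: poison navigation is forward-only, so directly comparing the endpoints of the two paths $R_1R_2$ and $R_2R_1$ from a common cell is not immediate. The spy relation $R$, together with the nominal-like behaviour of the distinct poison atoms $\poisonAtom_1,\poisonAtom_2$, should be exactly strong enough to express the needed equation; calibrating the interaction between $R$, $R_1$, $R_2$ and the three poison atoms---so that the spy conditions are enforceable and compatible with the tile conditions---will be the delicate bookkeeping that occupies most of the proof. Once $\phi_T$ is in place, correctness is the standard two-way argument: a tiling of $\mathbb{N}^2$ yields a satisfying model in the obvious way; conversely, from any pointed model of $\phi_T$ in $\mathfrak{M}^\emptyset$ the spy relation and the nominal-like poison atoms let one read off an isomorphic copy of $\mathbb{N}\times\mathbb{N}$ whose labelling by $\{p_t\}_{t\in T}$ is a valid tiling.
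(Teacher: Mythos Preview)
Your proposal is correct and follows essentially the same route as the paper: a reduction from $\mathbb{N}\times\mathbb{N}$ tiling via a spy-point construction, with $R$ the spy relation and $R_1,R_2$ the two grid directions, using poison atoms as nominal-like markers to enforce functionality and confluence, and a standard two-way correctness argument.

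One detail worth flagging in your confluence sketch: you mark the $R_1$-successor with $\poisonAtom_1$ and the $R_1R_2$-successor with $\poisonAtom_2$, then want to jump via the spy point to ``the $R_2$-successor of the original cell''---but as written you have no mark on the \emph{original} cell, so after the spy jump you cannot single it out. The paper sidesteps this with a cleaner device that needs only one poison atom for confluence: poison an arbitrary grid cell $c$ via $\blacksquare$ (the $R$-poison), return to the spy point, and then require of \emph{every} grid cell that if its $R_1R_2$-successor is poisoned then so is its $R_2R_1$-successor (formally $\blacksquare\,\square(q\to\square(\square_1\square_2\neg\poisonAtom\vee\square_2\square_1\poisonAtom))$). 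Combined with functionality this yields $R_1R_2=R_2R_1$ without ever needing to re-identify the starting cell. Your path-tracking idea can also be made to work by spending the third poison atom on the original cell, but the universal formulation is both simpler and closer to the hybrid-logic proof you are emulating.
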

	\begin{proof}[Sketch of proof]
		We reduce the problem of the $\mathbb{N} \times \mathbb{N}$ tilling in a similar way as for the proof of undecidability of hybrid logic $\mathcal{H}(\downarrow)$ presented in \cite{Ten_2005}.
		Let us recall $\mathbb{N} \times \mathbb{N}$ tilling problem. Given a finite set of colors $C$, a tile is a 4-tuple of colors (its 4 sides). The $\mathbb{N} \times \mathbb{N}$ tilling problem is then defined as follows:
		\begin{description}
			\item[Data:] A finite set $T$ of tiles. 
			\item[Problem:] Can the infinite grid $\mathbb{N} \times \mathbb{N}$ be tiled using only tiles in $T$ and such that two adjacent tiles share the same color on their common edge ?
		\end{description}
		This problem is known to be undecidable \cite{Harel_1983}. So let $T$ be a finite set of tiles. We claim that the following formula is satisfiable if and only if $T$ tiles the grid $\mathbb{N} \times \mathbb{N}$:
		\begin{align}
		\phi_T = \alpha \wedge \beta \wedge \gamma \wedge \square\left(\delta^1_T \wedge \delta^2_T \wedge \delta^3_T\right)
		\end{align} 
		with $\alpha, \beta, \gamma, \delta^1_T, \delta^2_T$ and $\delta^3_T$ such as:
		\begin{align*}
		\alpha & = q \wedge \square (\neg q \wedge \lozenge q) \wedge \square \blacksquare_1 \lozenge (q \wedge \lozenge \poisonAtom) \wedge \square \blacksquare_2 \lozenge(q \wedge \lozenge \poisonAtom) \\
		\beta & = \bigwedge_{i = 1, 2} \left(\square \lozenge_i \top \wedge \blacksquare \square (q \rightarrow \square (\lozenge_i \poisonAtom \rightarrow \square_i \poisonAtom))\right) \\
		\gamma & = \blacksquare \square \left(q \rightarrow \square (\square_1 \square_2 \neg \poisonAtom \vee \square_2 \square_1 \poisonAtom)\right) \\
		\delta_T^1 & = \bigvee_{t \in T} \left(p_t \wedge \bigwedge_{t'\in T, t' \neq t}\neg p_{t'}\right) \\
		\delta_T^2 & = \bigwedge_{t \in T}\left(p_t \rightarrow \square_2 \bigvee_{t' \in T, \mathit{left}(t') = \mathit{right}(t)} p_t\right) \\
		\delta_T^3 & = \bigwedge_{t \in T}\left(p_t \rightarrow \square_1 \bigvee_{t' \in T, \mathit{bottom}(t') = \mathit{top}(t)} p_t\right).
		\end{align*}
		
		In these formulas, the modality $\lozenge_1$ represent vertical moves on the grid, the modality $\lozenge_2$ horizontal moves and the modality $\lozenge$ moves from any point to any point on the grid (i.e., a universal modality). For a tile $t\in T$, the predicate $p_t$ models the fact that $t$ is placed on the point and the four predicates $\mathit{top}(t), \mathit{right}(t), \mathit{bottom}(t)$ and $\mathit{left}(t)$ represent the four colors of $t$. For $(\modelPML, w) \models \phi_T$, the sub-formulas of $\phi_T$ are interpreted as follow:
		\begin{itemize}
			\item $\alpha$: $w$ is a $q$-world, its $R$-successors are not $q$ and link back to it, and the set of its $R$-successors is closed under $R_1$ and $R_2$.
			\item $\beta$: for all $R$-successor of $w$, accessibility relations $R_1$ and $R_2$ are total functions. 
			\item $\gamma$: accessibility relations $R_1$ and $R_2$ commute.
			\item $\square(\delta_T^1 \wedge \delta_T^2 \wedge \delta_T^3)$: only one tile is present at each node and horizontal and vertical tiling are correct.
		\end{itemize}
		By the construction of $\phi_T$, it is easy to see that $\phi_T$ is satisfiable if and only if there is a tiling of the grid $\mathbb{N} \times \mathbb{N}$ with $T$ (Figure \ref{fig:gridUndecidability}).
	\end{proof}
	
	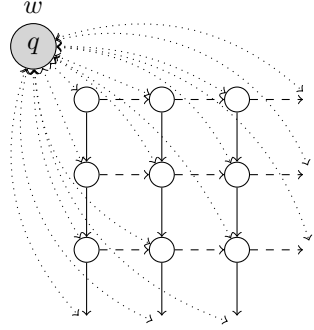
\begin{figure}
		\centering
		\begin{tikzpicture}
		\node[draw, circle] (01) {};
		\node[draw, circle, left of=01] (00) {};
		\node[draw, circle, right of=01] (02) {};
		\node[right of=02] (03) {};
		
		\node[draw, circle, below of=01] (11) {};
		\node[draw, circle, left of=11] (10) {};
		\node[draw, circle, right of=11] (12) {};
		\node[right of=12] (13) {};
		
		\node[draw, circle, below of=11] (21) {};
		\node[draw, circle, left of=21] (20) {};
		\node[draw, circle, right of=21] (22) {};
		\node[right of=22] (23) {};
		
		\node[below of=21] (31) {};
		\node[left of=31] (30) {};
		\node[right of=31] (32) {};
		
		\node[draw, circle, fill={rgb:black,1;white,5}, label = {[label distance= .005cm]90:$w$}, above left of = 00] (w) {$q$};
		
		\path[<->] (w) edge [dotted] node {} (00);
		\path[<->] (w) edge [dotted] node {} (01);
		\path[<->] (w) edge [dotted, bend left = 15] node {} (02);
		\path[<->] (w) edge [dotted, bend left] node {} (03);
		\path[<->] (w) edge [dotted] node {} (10);
		\path[<->] (w) edge [dotted, bend left = 15] node {} (11);
		\path[<->] (w) edge [dotted, bend left] node {} (12);
		\path[<->] (w) edge [dotted, bend left = 40] node {} (13);
		\path[<->] (w) edge [dotted, bend right = 15] node {} (20);
		\path[<->] (w) edge [dotted, bend right] node {} (21);
		\path[<->] (w) edge [dotted, bend left = 17] node {} (22);
		\path[<->] (w) edge [dotted, bend left = 35] node {} (23);
		\path[<->] (w) edge [dotted, bend right] node {} (30);
		\path[<->] (w) edge [dotted, bend right = 40] node {} (31);
		\path[<->] (w) edge [dotted, bend right = 33] node {} (32);
		
		\path[->] (00) edge [dashed] node {} (01);
		\path[->] (00) edge [] node {} (10);
		\path[->] (01) edge [dashed] node {} (02);
		\path[->] (01) edge [] node {} (11);
		\path[->] (02) edge [dashed] node {} (03);
		\path[->] (02) edge [] node {} (12);
		
		\path[->] (10) edge [dashed] node {} (11);
		\path[->] (10) edge [] node {} (20);
		\path[->] (11) edge [dashed] node {} (12);
		\path[->] (11) edge [] node {} (21);
		\path[->] (12) edge [dashed] node {} (13);
		\path[->] (12) edge [] node {} (22);
		
		\path[->] (20) edge [dashed] node {} (21);
		\path[->] (20) edge [] node {} (30);
		\path[->] (21) edge [dashed] node {} (22);
		\path[->] (21) edge [] node {} (31);
		\path[->] (22) edge [dashed] node {} (23);
		\path[->] (22) edge [] node {} (32);
		\end{tikzpicture}
		\caption{A  model of formula $\phi_T$. $R$ is represented by dotted links, $R_1$ by dashed link and $R_2$ by plain links, the shadow world is the pointed one.}
		\label{fig:gridUndecidability}
	\end{figure}

	\subsection{Failure of FMP for \PML}
	
	It is unclear however whether \PML{} {\em with only} one standard and one poison modality is also undecidable. We suspect it is, and we can show it fails to have the finite model property.
	
	\begin{proposition} \label{prop:infinite}
		\PML{} does not have the finite model property.
	\end{proposition}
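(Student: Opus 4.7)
The plan is to exhibit a formula $\phi \in \languagePML$ that is satisfiable in some infinite pointed model but refuted by every finite one. Satisfiability will be witnessed by an obvious acyclic infinite candidate, namely $(\mathbb{N}, S)$ with $S$ the successor relation, $V(\poisonAtom) = \emptyset$, evaluated at $0$: there every state has exactly one outgoing edge and no cycle is possible. The heart of the argument is then to design $\phi$ so that any finite model of $\phi$ would have to embed an infinite simple $R$-path out of the pointed world, which is impossible by finiteness, yielding the required contradiction.

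The two ingredients I would combine are a \emph{productivity} clause guaranteeing that each reachable state carries an outgoing $R$-edge (a nested conjunction of $\square^i \lozenge \top$-style subformulas, so that every state at bounded depth has a successor), and a \emph{freshness} clause using the poison modality. The key mechanism is that along a nested $\blacksquare$ prefix each traversal records the newly visited state in $V(\poisonAtom)$, so by the time the formula evaluates a subformula below $\blacksquare \blacksquare \cdots \blacksquare$ all intermediate states along the witnessing $R$-path carry the poison mark. Asking for $\lozenge \neg \poisonAtom$ at the end of such a prefix then forbids the traced path from closing back onto the prefix in the final step, exactly as in the circuit-detecting formulas of Fact \ref{prop:CycleDetection}. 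Combining productivity and freshness, any finite model of $\phi$ must contain an $R$-path out of $w$ longer than $|W|$ that never revisits, which is impossible.

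The main obstacle is that $\languagePML$ is finitary and contains no fixed-point operator, so the naive global condition ``no cycle of any length'', morally $\bigwedge_{n \geq 1} \neg \blacklozenge \delta_n$, is not available as a single formula. The technical crux is therefore to weave the productivity and freshness clauses together so that, although $\phi$ literally only controls $R$-paths of bounded syntactic depth, the $\square$-quantification over immediate successors combined with the $\blacksquare$-induced accumulation of $\poisonAtom$ is strong enough to preclude every finite cardinality: concretely, in any model with $|W| = k$ the productivity conjunct forces an $R$-path of length $k+1$, at which point the pigeonhole principle yields a revisit and hence a circuit detectable by some $\blacklozenge \delta_j$ with $j \leq k$, contradicting the freshness conjunct. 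Verifying satisfiability in $(\mathbb{N}, S, 0)$ is then immediate by induction on the modal depth of $\phi$, since no cycles exist and every $n$ has the genuine fresh successor $n+1$.
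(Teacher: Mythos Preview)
Your proposal has a genuine gap that you yourself flag but do not close. You correctly observe that $\languagePML$ is finitary, so the intended infinite conjunction $\bigwedge_{n\geq 1}\neg\blacklozenge\delta_n$ is unavailable. You then assert that ``weaving'' productivity and freshness clauses together will nonetheless preclude every finite cardinality, but no concrete formula is produced, and in fact none of the shape you describe can work. Any formula built from nested $\square^i\lozenge\top$ together with prefixes $\blacksquare^j\lozenge\neg\poisonAtom$ has some fixed modal depth $d$. A single directed cycle of length $d+1$ (with empty $\poisonAtom$ and all atoms false) then satisfies every such clause: productivity holds because each state has a successor, and every $\blacksquare^j\lozenge\neg\poisonAtom$ with $j\leq d$ holds because after poisoning $j\leq d$ consecutive states the next successor on the cycle is still unpoisoned (the starting point $w$ is never poisoned by $\blacksquare$). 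So your candidate $\phi$ has a finite model, and the pigeonhole argument in your last paragraph never gets off the ground: the productivity conjunct of depth $d$ does \emph{not} force a simple path of length $k+1$ in a model of size $k>d$.

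The paper's proof sidesteps this obstacle by a genuinely different mechanism. Rather than trying to forbid all cycles, it uses an auxiliary atom $q$ to stratify the model into two levels and then forces a single $q$-state $w'$ to have \emph{infinitely many} distinct $\neg q$-successors arranged in a one-way chain, each symmetrically linked back to $w'$. The infinity is in the \emph{branching} at $w'$, not in the depth of an acyclic path, and the finite-depth formula achieves this by combining (i) a $3$-cycle constraint through $w'$, (ii) irreflexivity of the second level, (iii) a ``funnel'' clause forcing every third-level state to see $w'$, and (iv) an existential clause producing a root with no predecessor on the chain. The interaction of these fixed-depth constraints is what makes every model infinite; your productivity/freshness scheme lacks any comparable leverage.
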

	\begin{proof}
		We provide a formula whose models are all infinite. Let us consider $\phi = \alpha \land \beta \land \gamma \land \delta \land \epsilon$ with the sub-formulas defined below.
		\begin{itemize}
			\item $\alpha = \neg q \land \lozenge \top \land \square q \land \square (\lozenge \top \land \square \neg q)$: the current state falsifies $q$ and all its successors (there exists at least one) are $q$ and have in turn successors (at least one) which all falsify $q$.
			\item $\beta = \blacksquare \square \lozenge \poisonAtom$: after any poisoning a state is reached whose successors can reach the poisoned state in one step. In other words,
			all successors of the current state have successors linked via symmetric edges.
			\item $\gamma = \blacksquare \square \lozenge (\neg q \land \lozenge \poisonAtom) \land \square \square \neg \blacklozenge \lozenge \poisonAtom$: after any poisoning a state is reached whose successors are not reflexive loops (right conjunct), and can reach a $\neg q$ state which can in turn reach the poison state. In other words, all successors of the current state lay on cycles of length $3$.
			\item $\delta = \square \square \blacksquare \square (q \rightarrow \lozenge \poisonAtom)$: all successors of the current state's successors are such that after any poisoning, and  further $q$-successor can reach back to the poisoned state.
			\item $\epsilon = \square \blacklozenge \neg \lozenge (q \land \lozenge( \neg q \land \lozenge \poisonAtom)))$: all successors of the current state are such that there is one successor that can be poisoned and such that none of its successors satisfies $q$ and can reach the poisoned state in two steps via a $\neg q$ state.  
		\end{itemize}
		Now let $(\modelPML, w) \models \phi$. Then, $w$ is followed by distinct successors $w'$ ($\alpha$) that have successors $w''$ which are linked back to their predecessors $w'$ by symmetric edges ($\beta$). These $w''$ states also have successors, different from $w'$ which also have $w'$ as successor ($\gamma$) and which are also successors of $w'$ ($\delta$). Hence, $w''$ is followed by an infinite path of distinct states. Finally, there exists one such $w''$ which has no other predecessor than $w'$ ($\epsilon$), that is, $w''$ is the root of an infinite sequence of distinct states which are all successors of $w'$. One such model is depicted in Figure \ref{fig:infinite}.
	\end{proof}

	\begin{figure}
		\centering
		\begin{tikzpicture}[node distance = .5cm]
		\node[draw, circle, fill={rgb:black,1;white,5}, label = {[label distance= .01cm] 180:$w$}] (0) {};
		\node[draw, circle, label = {[label distance = .01cm]-15:$w'$}, right = of 0] (1) {$q$};
		\node[draw, circle, label = {[label distance = .01cm]90:$w_1''$}, above = of 1] (2) {};
		\node[draw, circle, label = {[label distance = .01cm]90:$w_2''$}, right = of 2] (3) {};
		\node[draw, circle, label = {[label distance = .01cm]90:$w_3''$}, right = of 3] (4) {};
		\node[draw = none, right =  of 4] (5) {\ldots};
		
		\path[->] (0) edge [] node {} (1);
		\path[<->] (1) edge [] node {} (2);
		\path[<->] (1) edge [] node {} (3);
		\path[<->] (1) edge [] node {} (4);
		\path[<->] (1) edge [] node {} (5);
		\path[->] (2) edge [] node {} (3);
		\path[->] (3) edge [] node {} (4);
		\path[->] (4) edge [] node {} (5);
		\end{tikzpicture}
		\caption{An infinite model of $\phi$ from Proposition \ref{prop:infinite}, the shadow world is the pointed one.}
		\label{fig:infinite}
	\end{figure}
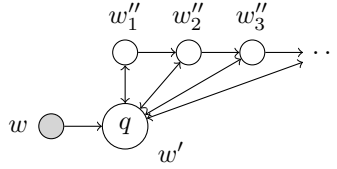

	\section{Discussion \& Conclusions} \label{sec:conclusions}

	In this last section we relate \PML{} with memory and hybrid logics establishing results about their relative expressivity (Propositions \ref{prop:memory} and \ref{prop:hybrid}). Such results are useful to position \PML{} precisely within the landscape of existing extensions of the standard modal language. We then conclude by charting a few lines of future research.
	
	\subsection{\PML{} and Memory Logics}
	
	As hinted at in the introduction, \PML{} is tightly related to memory logics \cite{areces08expressive,areces11expressive}. The simplest memory logic, $\Model(\textcircled{r},\textcircled{k})$, extends modal semantics by considering frames $\tuple{W,R,M}$ where $M \subseteq W$ is a set of states that have been `memorized'. 
	The standard modal language is then extended with two operators $\textcircled{r}$ and $\textcircled{k}$ defined as: 
	\begin{align*}
	(\tuple{W,R,M, V}, w) & \models \textcircled{r} \phi \Longleftrightarrow (\tuple{W,R,M \cup \{ w \}, V}, w) \models \phi \\
	(\tuple{W,R,M, V}, w) & \models \textcircled{k} \Longleftrightarrow w \in M,
	\end{align*}
	where $V$ is a valuation function. Intuitively, the $\textcircled{r}$ stores the current state in the memory $M$, serving a similar purpose to our poisoning operation, and the nullary operator $\textcircled{k}$ works precisely as our atom $\poisonAtom$. Intuitively, \PML{} can be seen as a memory logic in which storing states via $\textcircled{r}$ occurs only after traversing an edge in the underlying frame.
	Technically, \PML{} is a proper fragment of $\Model(\textcircled{r},\textcircled{k})$:
	
	\begin{proposition} \label{prop:memory}
		$\Model(\textcircled{r},\textcircled{k})$ is strictly more expressive than  \PML.
	\end{proposition}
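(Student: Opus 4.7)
The plan is to prove the proposition in two parts. First, I would show that every $\languagePML$ formula has an equivalent in $\Model(\textcircled{r},\textcircled{k})$ over the class $\mathfrak{M}^\emptyset$, by defining a compositional translation $t$ that sets $t(p) = p$ for $p \in \propSet$, $t(\poisonAtom) = \textcircled{k}$, commutes with Boolean connectives and $\lozenge$, and maps $t(\blacklozenge \phi) = \lozenge \textcircled{r}\, t(\phi)$. A straightforward induction on $\phi$, under the correspondence that identifies $V(\poisonAtom)$ with the memory set $M$, yields $(\modelPML, w) \models \phi \iff (\modelPML', w) \models t(\phi)$ whenever $\modelPML \in \mathfrak{M}^\emptyset$ and $\modelPML'$ is its reading as a memory model with $M = \emptyset$. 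The only non-routine inductive case is $\blacklozenge$, and it works by construction because both semantics pick a successor and then add exactly that successor to the poisoned/memorized set before recursing.

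Second, for the strictness, I would exhibit two pointed models that are p-bisimilar (hence $\PML$-equivalent by Theorem \ref{thm:pbisi->equiv}) but separated by a memory logic formula. A natural distinguishing formula is $\textcircled{r} \lozenge \textcircled{k}$, which holds exactly at reflexive states: $\textcircled{r}$ memorizes the current state without moving, something the $\blacklozenge$ modality of $\PML$ cannot simulate. I would take $\modelPML_1 = (\{w\}, \{(w,w)\}, V_1)$ and $\modelPML_2 = (\{v_0,v_1\}, \{(v_0,v_1),(v_1,v_1)\}, V_2)$ with empty valuations, pointed at $w$ and $v_0$ respectively. The formula $\textcircled{r} \lozenge \textcircled{k}$ is satisfied at $w$ (the only successor of $w$ is the memorized $w$ itself) but fails at $v_0$ (whose only successor is $v_1 \neq v_0$).

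The main obstacle is verifying $(\modelPML_1, w) \bisimulation (\modelPML_2, v_0)$. I would exhibit the relation $Z$ that pairs $(\modelPML_1, w)$ with $(\modelPML_2, v_0)$ in the unpoisoned case, and otherwise pairs $(\modelPML_1^*, w)$ with $(\modelPML_2^*, v_1)$ whenever $w$ and $v_1$ carry the same poisoning status. The key observation is that $v_0$ has no incoming edge, so it can never be revisited by $\lozenge$ nor poisoned by $\blacklozenge$; every move leaving $v_0$ arrives at $v_1$, which thereafter mimics the reflexive loop at $w$. Checking the four clauses Zig$_\lozenge$, Zag$_\lozenge$, Zig$_\blacklozenge$, Zag$_\blacklozenge$ is then a case analysis on the poisoning status of $w$ and $v_1$, and in each case the unique successor of the current state in either model preserves the invariant defining $Z$. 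Once $Z$ is confirmed to be a p-bisimulation, Theorem \ref{thm:pbisi->equiv} guarantees that no $\languagePML$ formula can separate $(\modelPML_1, w)$ from $(\modelPML_2, v_0)$, while $\textcircled{r} \lozenge \textcircled{k}$ does, establishing the strict inclusion.
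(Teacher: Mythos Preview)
Your proposal is correct and follows the same two-part strategy as the paper: a compositional translation $t(\blacklozenge\phi)=\lozenge\textcircled{r}\,t(\phi)$, $t(\poisonAtom)=\textcircled{k}$ for the inclusion, and a pair of p-bisimilar pointed models separated by a memory-logic formula for strictness.

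The only difference lies in the choice of separating example. The paper uses a three-point model $w_1\to w_2\leftrightarrow w_3$ against a two-point model $w_1'\leftrightarrow w_2'$, distinguished by $\textcircled{r}\,\lozenge\lozenge\,\textcircled{k}$ (existence of a $2$-cycle through the current point). You instead use a single reflexive point against $v_0\to v_1\circlearrowleft$, distinguished by $\textcircled{r}\,\lozenge\,\textcircled{k}$ (reflexivity of the current point). Your counterexample is slightly simpler and makes the intuition very transparent: $\textcircled{r}$ can mark the \emph{current} state before moving, whereas $\blacklozenge$ can only mark a state \emph{after} traversing an edge, so $\PML$ cannot see whether the root itself lies on a loop of length~$1$. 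Your verification of the p-bisimulation is sound; the key point you identify---that $v_0$ has no incoming edge and hence can never be revisited or poisoned---is exactly what makes the three pairs $(w,\emptyset)\sim(v_0,\emptyset)$, $(w,\emptyset)\sim(v_1,\emptyset)$, $(w,\{w\})\sim(v_1,\{v_1\})$ close under all Zig/Zag clauses.
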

	\begin{proof}
		First of all observe that \PML{} models and $\Model(\textcircled{r},\textcircled{k})$ are exactly the same type of structures, where $M$ of the latter type of models corresponds to the truth-set of $\poisonAtom$ in the former type of models. We show that $\Model(\textcircled{r},\textcircled{k})$ is at least as expressive as \PML, by providing a truth-preserving embedding of the latter into the former. Such an embedding is provided by the following translation (clauses for Boolean connectives and $\lozenge$ are omitted as straightfoward):
		\begin{align*}
		MT(\poisonAtom) & = \textcircled{k} \\
		MT(\blacklozenge \phi) & = \lozenge \textcircled{r} MT(\phi)
		\end{align*}
		It is easy to see that such translation is truth-preserving. 
		
		To show that $\PML$ is strictly less expressive than $\Model(\textcircled{r},\textcircled{k})$ it suffices to provide two p-bisimilar pointed models which can be distinguished by a formula of $\Model(\textcircled{r},\textcircled{k})$. Such models are depicted in Figure \ref{fig:memory}. The model on the right satisfies formula $\textcircled{r} \lozenge \lozenge \textcircled{k}$ while the model on the left falsifies it.
	\end{proof}
	
	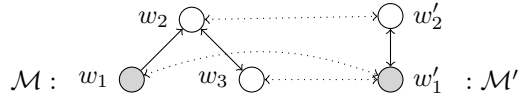
\begin{figure}[t]
		\centering
		\begin{tikzpicture}[node distance = .5cm]
		\node[draw, circle, label = {[label distance= .005cm]180:$w_2$}] (M0) {};
		\node[draw = none, below = of M0] (M1) {};
		\node[draw, circle, fill={rgb:black,1;white,5}, left = of M1, label = {[label distance= .005cm]180:$w_1$}] (M2) {};
		\node[draw, circle, right = of M1, label = {[label distance= .005cm]180:$w_3$}] (M3) {};
		\node[draw = none, left = .7cm of M2] (M) {$\modelPML:$};
		
		\path[->] (M2) edge [] node {} (M0);
		\path[<->] (M3) edge [] node {} (M0);
		
		\node[draw, circle, fill={rgb:black,1;white,5}, right = 1.5cm of M3, label = {[label distance= .005cm]0:$w_1'$}] (N0) {};
		\node[draw, circle, above = of N0, label = {[label distance= .005cm]0:$w_2'$}] (N4) {};
		\node[draw = none, right = .7cm of N0] (N) {$:\modelPML'$};
		
		\path[<->] (N0) edge [] node {} (N4);
		\path[<->] (M3) edge [dotted] node {} (N0);
		\path[<->] (M0) edge [dotted] node {} (N4);
		\path[<->] (M2) edge [bend left = 20, dotted] node {} (N0);
		\end{tikzpicture}
		\caption{Two p-bisimilar models from Proposition \ref{prop:memory}, the shadow world is the pointed one.}
		\label{fig:memory}
	\end{figure}

	\subsection{\PML{} and Hybrid Logics}
	
	\PML{} has, perhaps unsurprisingly, also tight links with hybrid logics. We show how \PML{} can be embedded into $\mathcal{H}(\binder)$ defined by \cite{Ten_2005}:
	$$\languageHL : \phi := p \mid i \mid \neg \phi \mid \phi \wedge \phi \mid \lozenge \phi \mid \binder x. \phi,$$
	with $p \in \propSet \cup \{\poisonAtom\}$ a propositional atom, and $i \in \nomSet$ a nominal. We write $\HLSatisfaction$ the satisfaction relation for $\mathcal{H}(\binder)$. Given an assignment $g : \nomSet \rightarrow W$, $g_m^x$ is called a $x$-variant of $g$ if $\forall i \in \nomSet, g(i) = g_m^x(i)$ and $g_m^x(x) = m$. The semantics is then $(M, g, m) \HLSatisfaction i \Leftrightarrow m = g(i)$ and $(M, g, m) \HLSatisfaction \binder x . \phi \Leftrightarrow (M, g_m^x, m) \HLSatisfaction \phi$. We can then set up the translation $HT^S:\languagePML \rightarrow \languageHL$ as follows, with $S \subseteq \nomSet$:
	\begin{align*}
	& HT^S(p) = p \\
	& HT^S(\poisonAtom) = \poisonAtom \vee \bigvee_{i\in N} i \\
	& HT^S(\neg \phi) = \neg HT^S (\phi) \\
	& HT^S(\phi \wedge \psi) = HT^S(\phi) \wedge HT^S(\psi) \\
	& HT^S(\lozenge \phi) = \lozenge HT^S(\phi) \\
	& HT^S(\blacklozenge \phi) = \lozenge \left(\binder x.HT ^{S \cup \{x\}} (\phi)\right),
	\end{align*}
	with $x$ a ''fresh variable'' never used before. 
	We also need a way to transform \PML{} models into hybrid models. Let $\modelPML = (W, R, V)$ be a \PML{}-model, we define $M = (W, R, V')$, the hybrid extension of $\modelPML$, by extending the valuation $V$ so that $\forall p \in \propSet \cup \{\poisonAtom\}, V'(p) = V(p)$, $\forall w\in W, \exists i \in \nomSet, V'(i) = \{w\}$ and $\forall i \in \nomSet, |V'(i)| = 1$. We can now proceed to show the translation $HT^S$ is correct.
	
	\begin{lemma}
		\label{lemma:HybridTransLemma}
		Let $\modelPML = (W, R, V)$ be a \PML{}-model and $M = (W, R, V')$ its hybrid extension. Let us consider $v, w \in W$ and $g$ an assignment. Then for $\phi \in \languagePML$ a \PML{}-formula and any set $S$, we have:
		$$(M^\poison_v, g, w) \HLSatisfaction HT^S(\phi) \Leftrightarrow (M, g_v^x, w) \HLSatisfaction HT^{S\cup\{x\}}(\phi).$$
	\end{lemma}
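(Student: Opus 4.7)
The plan is to proceed by induction on the structure of $\phi$, mirroring the strategy used for Lemma \ref{lem:FOLTrans}. The intuition is that poisoning $v$ in the \PML{}-model is semantically mimicked in the hybrid model by binding a fresh nominal variable $x$ to $v$ and extending the set $S$ of nominals appearing in the translation of $\poisonAtom$. The cases for propositional atoms in $\propSet$ and for the Boolean connectives are immediate (they do not interact with either $^\poison$ or $S$), so I would omit them and concentrate on the atom $\poisonAtom$ and on the modality $\blacklozenge$.

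For the base case $\phi = \poisonAtom$, I would unfold $HT^S(\poisonAtom) = \poisonAtom \vee \bigvee_{i \in S} i$ on the left and $HT^{S \cup \{x\}}(\poisonAtom) = \poisonAtom \vee \bigvee_{i \in S \cup \{x\}} i$ on the right. The left-hand side then asserts $w \in V(\poisonAtom) \cup \{v\}$ or $w = g(i)$ for some $i \in S$, while the right-hand side asserts $w \in V(\poisonAtom)$, or $w = g(i)$ for some $i \in S$, or $w = g_v^x(x) = v$. These two conditions coincide, which settles the case. For $\phi = \lozenge \psi$, both translations commute with $\lozenge$ and the poisoning of $v$ is preserved along $R$-successors, so the induction hypothesis on $\psi$ at a successor $u$ of $w$ applies directly.

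The main case, and the place where the bookkeeping has to be done carefully, is $\phi = \blacklozenge \psi$. On the left, we have
\begin{align*}
(M^\poison_v, g, w) \HLSatisfaction HT^S(\blacklozenge \psi)
&\Leftrightarrow \exists u,\; wRu,\; (M^\poison_v, g_u^y, u) \HLSatisfaction HT^{S \cup \{y\}}(\psi),
\end{align*}
where $y$ is the fresh variable introduced by the translation clause for $\blacklozenge$. On the right, we have
\begin{align*}
(M, g_v^x, w) \HLSatisfaction HT^{S \cup \{x\}}(\blacklozenge \psi)
&\Leftrightarrow \exists u,\; wRu,\; (M, (g_v^x)_u^y, u) \HLSatisfaction HT^{S \cup \{x, y\}}(\psi).
\end{align*}
Applying the induction hypothesis to $\psi$, with parameter set $S \cup \{y\}$ and state $u$, yields
\[
(M^\poison_v, g_u^y, u) \HLSatisfaction HT^{S \cup \{y\}}(\psi) \Leftrightarrow (M, (g_u^y)_v^x, u) \HLSatisfaction HT^{S \cup \{y\} \cup \{x\}}(\psi),
\]
and since $x$ and $y$ are distinct fresh variables, the two assignments $(g_u^y)_v^x$ and $(g_v^x)_u^y$ agree on all nominals, so the quantifications over $u$ on both sides witness the same successors.

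The hard part is not any single calculation but keeping the freshness discipline of the variables straight: one must check that $x$ was chosen fresh with respect to $\phi$ (so it does not clash with any variable introduced further inside $\psi$ by nested occurrences of $\blacklozenge$), and hence that the assignment swap $(g_u^y)_v^x = (g_v^x)_u^y$ is legitimate. Once this is ensured, the induction step for $\blacklozenge$ goes through, and the lemma follows.
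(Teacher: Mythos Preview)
Your proposal is correct and follows essentially the same route as the paper: induction on $\phi$, with the only non-trivial cases being $\poisonAtom$ (unfolding the disjunction and matching $w \in V(\poisonAtom)\cup\{v\}$ against the extra disjunct $x$ under $g_v^x$) and $\blacklozenge\psi$ (unfolding the $\lozenge\binder y$ clause on both sides and applying the induction hypothesis at the successor $u$). Your explicit treatment of the assignment commutation $(g_u^y)_v^x = (g_v^x)_u^y$ and the freshness of $x$ relative to variables introduced inside $\psi$ is in fact more careful than the paper's sketch, which leaves these bookkeeping points implicit.
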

	\begin{proof}[Sketch of proof]
		We show this result by induction on the structure of $\phi$. The proof is trivial for the propositional case and the Boolean connectives, hence we only present cases for the poison atom and the poison modality.
		
		\fbox{$\phi = \poisonAtom$} The claim is proven by the following series of equivalences using the definition of the poison operation $\poison$:
		\begin{align*}
		(M^\poison_v, g, w) \HLSatisfaction HT^S(\poisonAtom) & \Leftrightarrow (M^\poison_v, g, w) \HLSatisfaction \poisonAtom \vee \bigvee_{i\in S} i \\
		& \Leftrightarrow (M, g, w) \HLSatisfaction \poisonAtom \vee \bigvee_{i \in S \cup \{x\}} i \\
		& \Leftrightarrow (M, g_v^x, w) \HLSatisfaction HT^{S\cup \{x\}}(\poisonAtom).
		\end{align*}
		
		\fbox{$\phi = \blacklozenge \psi$} with $\psi \in \languagePML$ The claim is proven by the following series of equivalences using the definition of $\poison$, the semantics of 
		$\binder$ and the induction hypothesis :
		\begin{align*}
		& (M^\poison_v, g, w) \HLSatisfaction HT^S(\blacklozenge \psi) \\
		\Leftrightarrow~ & (M^\poison_v, g, w) \HLSatisfaction \lozenge \left(\binder y . HT^{S\cup \{y\}}(\psi)\right) \\
		\Leftrightarrow~ & \exists u \in W, wRu, (M^\poison_v, g, u) \HLSatisfaction \binder y. HT^{S\cup\{y\}}(\psi) \\
		\Leftrightarrow~ & \exists u \in W, wRu, (M^\poison_v, g_u^y, u) \HLSatisfaction HT^{S\cup\{y\}}(\psi) \\
		\Leftrightarrow~ & \exists u \in W, wRu, (M, (g_u^y)_v^x, u) \HLSatisfaction HT^{S \cup \{x, y\}}(\psi) \\
		\Leftrightarrow~ & \exists u \in W, wRu, (M, g_v^x, u) \HLSatisfaction \binder y . HT^{S \cup \{x, y\}}(\psi) \\
		\Leftrightarrow~ & (M, g_v^x, w) \HLSatisfaction \lozenge \left(\binder y. HT^{S\cup\{x, y\}}(\psi) \right) \\
		\Leftrightarrow~ & (M, g_v^x, w) \HLSatisfaction HT^{S \cup \{x, y\}}(\blacklozenge \psi).
		\end{align*}
		This completes the proof.
	\end{proof}

	\begin{proposition} \label{prop:hybrid}
		Let $\modelPML = (W, R, V)$ be a \PML{}-model, $M = (W, R, V')$ its hybrid extension, $g$ an assignment and $\phi \in \languagePML$ a \PML{}-formula, we have:
		$$(\modelPML, w) \models \phi \Longleftrightarrow (M, g, w) \HLSatisfaction HT^\emptyset(\phi).$$
	\end{proposition}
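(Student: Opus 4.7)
The plan is to prove the equivalence by structural induction on the \PML{}-formula $\phi$, with the poison modality case being the only non-trivial one and handled via Lemma \ref{lemma:HybridTransLemma}. Before starting the induction, I note that since $M$ is the hybrid extension of $\modelPML$, the two valuations agree on every propositional atom in $\propSet \cup \{\poisonAtom\}$, and that $HT^\emptyset(\phi)$ contains no free variables---the fresh variables introduced by the $\blacklozenge$ clause are always captured immediately by the $\binder$ operator---so the assignment $g$ on the right-hand side is in fact irrelevant for the truth value and can be fixed arbitrarily.

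The atomic, Boolean and $\lozenge$ cases are then routine. For $\phi = p \in \propSet$, both sides reduce to $w \in V(p) = V'(p)$. For $\phi = \poisonAtom$, the translation $HT^\emptyset(\poisonAtom) = \poisonAtom \lor \bigvee_{i \in \emptyset} i$ collapses to $\poisonAtom$, and both sides hold iff $w \in V(\poisonAtom)$. Boolean connectives and the standard diamond commute with the translation and are dispatched directly by the induction hypothesis, since $HT^\emptyset(\lozenge \psi) = \lozenge HT^\emptyset(\psi)$ and no poisoning is triggered.

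The interesting case is $\phi = \blacklozenge \psi$. Unfolding the \PML{} semantics yields $(\modelPML, w) \models \blacklozenge \psi$ iff there exists $v \in W$ with $wRv$ and $(\modelPML_v^\poison, v) \models \psi$. Applying the induction hypothesis to $\psi$ at the model $\modelPML_v^\poison$---whose hybrid extension is precisely $M_v^\poison$, since poisoning only modifies $V(\poisonAtom)$ and preserves every nominal valuation---gives $(M_v^\poison, g, v) \HLSatisfaction HT^\emptyset(\psi)$. Lemma \ref{lemma:HybridTransLemma}, instantiated with $S = \emptyset$, then converts the semantic poisoning of $v$ into the syntactic binding $(M, g_v^x, v) \HLSatisfaction HT^{\{x\}}(\psi)$ for a fresh $x$, and unfolding the semantics of $\binder$ and $\lozenge$ in $\mathcal{H}(\binder)$ rewrites this as $(M, g, w) \HLSatisfaction \lozenge(\binder x.\, HT^{\{x\}}(\psi))$, which is exactly $(M, g, w) \HLSatisfaction HT^\emptyset(\blacklozenge \psi)$. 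The only subtlety to watch is the fresh-variable discipline: $x$ must not occur in $\psi$ nor collide with bindings already captured by $g$, which is easily arranged by having $HT$ always pick a variable outside the accumulated set $S$, after which the induction closes without further complications.
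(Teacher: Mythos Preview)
Your proposal is correct and follows essentially the same route as the paper's own proof: structural induction on $\phi$, with the $\blacklozenge$ case discharged by first applying the induction hypothesis to $\psi$ in the poisoned model and then invoking Lemma~\ref{lemma:HybridTransLemma} with $S=\emptyset$ to trade semantic poisoning for the $\binder x$ binding. Your added remarks---that the hybrid extension of $\modelPML_v^\poison$ is $M_v^\poison$, that $HT^\emptyset(\phi)$ is closed so $g$ is immaterial, and the fresh-variable discipline---are all sound and in fact make explicit points the paper leaves implicit.
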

	\begin{proof}[Sketch of proof]
		The proof is done by induction on the structure of $\phi$. We only present the non classical cases.
		
		\fbox{$\phi = \poisonAtom$} The claim is proven by the following series of equivalences using the defintion of a hybrid extension:
		\begin{align*}
		(\modelPML, w) \models \poisonAtom & \Leftrightarrow w \in V(\poisonAtom) \\
		& \Leftrightarrow w \in V'(\poisonAtom) \\
		& \Leftrightarrow (M, g, w) \HLSatisfaction \poisonAtom \\
		& \Leftrightarrow (M, g, w) \HLSatisfaction HT^\emptyset(\poisonAtom).
		\end{align*}
		
		\fbox{$\phi = \blacklozenge \psi$} with $\psi \in \languagePML$ The claim is proven by the following series of equivalences the definition of $\poison$, the semantics of 
		$\binder$, the induction hypothesis and Lemma \ref{lemma:HybridTransLemma}:
		\begin{align*}
		(\modelPML, w) \models \blacklozenge \psi & \Leftrightarrow \exists v \in W, wRv, (\modelPML^\poison_v, v) \models \psi \\
		& \Leftrightarrow \exists v \in W, wRv, (M^\poison_v, g, v) \HLSatisfaction HT^\emptyset(\psi) \\
		& \Leftrightarrow \exists v \in W, wRv, (M, g_v^x, v) \HLSatisfaction HT^{\{x\}}(\psi) \\
		& \Leftrightarrow \exists v \in W, wRv, (M, g, v) \HLSatisfaction \binder x. HT^{\{x\}}(\psi) \\
		& \Leftrightarrow (M, g, w) \HLSatisfaction \lozenge(\binder x. HT^{\{x\}}(\psi)) \\
		& \Leftrightarrow (M, g, w) \HLSatisfaction HT^\emptyset(\blacklozenge \psi).
		\end{align*}
		This completes the proof.
	\end{proof}

	\subsection{Discussion}
	
	These translations from \PML to memory logics and hybrid logics although very technical, have a particular interest when it comes to proof systems. Indeed we are now able to use the axiomatization and proof systems of these logics \cite{areces11expressive, Blackburn_2006_handbook} as prof systems for \PML, albeit too powerful for what we require. 
	
	\subsection{Conclusions}
	
	The paper has introduced and studied a modal logic \PML{} that arises naturally from a game-theoretic approach to a central decision problem in argumentation theory: the existence of credulously admissible sets.
	Our results provide new links between abstract argumentation theory \cite{dung95acceptability}, games on graphs \cite{berge96combinatorial} and modal logic \cite{Blackburn_2001}. Many directions for future research present themselves, at several levels. From the logic point of view, several technical problems remain open concerning \PML: Can the logic be axiomatised via a Hilbert calculus, possibly using insights from the memory logic literature (e.g., \cite{areces09completeness})? Can \PML{} be embedded in a fixed-variable fragment of \FOL{} (thereby shedding light on the precise level of saturation that would suffice for Theorem \ref{thm:equiv->pbisi})? Is \PML{} (with one modal operator and one sabotage operator) decidable? Can the logic be extended in a natural way with a least-fixpoint operator, for instance to express formulas $\eqref{eq:OPP}$ and $\eqref{eq:PRO}$? From the argumentation theory point of view, a natural question is whether the poison game can be adapted to capture the property of existence of skeptically admissible arguments, that is, arguments that belong to {\em all} admissible sets in a framework.

	\printbibliography
	
\end{document}